\DeclareMathOperator*{\argmin}{arg\,min}
\DeclareMathOperator*{\argmax}{arg\,max}
\DeclareMathOperator{\F}{\mathcal{F}}
\DeclareMathOperator{\T}{\mathcal{T}}
\DeclareMathOperator{\Z}{\mathbb{Z}}
\DeclareMathOperator{\maaf}{MAAF}
\newcommand{\set}[1]{\{#1\}}
\newcommand{\OLA}{\texttt{OLA}}
\newcommand{\multi}{\texttt{multi}}
\newcommand{\mset}{\texttt{m\_set}}
\newcommand{\mroot}{\texttt{m\_root}}
\newcommand{\leafof}{\texttt{leaf\_of}}
\newtheorem{theorem}{Theorem}
\newtheorem{lemma}{Lemma}
\newtheorem{corollary}{Corollary}
\newtheorem{observation}{Observation}
\title{Ordered Leaf Attachment (OLA) Vectors can Identify Reticulation Events even in Multifurcated Trees}
\author{Alexey Markin$^{1,\ast}$, Tavis K. Anderson$^1$\\
\footnotesize\textit{$^{1}$~Virus and Prion Research Unit, National Animal Disease Center, USDA-ARS}
\\
\footnotesize\textit{*Email: alexey.markin@usda.gov}%
}
\date{}
\begin{document}

\maketitle

\begin{abstract}
    Recently, a new vector encoding, Ordered Leaf Attachment (OLA), was introduced that represents $n$-leaf phylogenetic trees as $n-1$ length integer vectors by recording the placement location of each leaf. Both encoding and decoding of trees run in linear time and depend on a fixed ordering of the leaves. Here, we investigate the connection between OLA vectors and the maximum acyclic agreement forest (MAAF) problem. A MAAF represents an optimal breakdown of $k$ trees into reticulation-free subtrees, with the roots of these subtrees representing reticulation events.
    We introduce a \emph{corrected OLA distance} index over OLA vectors of $k$ trees, which is easily computable in linear time. We prove that the corrected OLA distance corresponds to the size of a MAAF, given an optimal leaf ordering that minimizes that distance. Additionally, a MAAF can be easily reconstructed from optimal OLA vectors. We expand these results to multifurcated trees: we introduce an $O(kn \cdot m\log m)$ algorithm that optimally resolves a set of multifurcated trees given a leaf-ordering, where $m$ is the size of a largest multifurcation, and show that trees resolved via this algorithm also minimize the size of a MAAF. These results suggest a new approach to fast computation of phylogenetic networks and identification of reticulation events via random permutations of leaves. Additionally, in the case of microbial evolution, a natural ordering of leaves is often given by the sample collection date, which means that under mild assumptions, reticulation events can be identified in polynomial time on such datasets.
\end{abstract}

\section{Introduction} 
Identifying reticulation events and the construction of phylogenetic networks that integrate reticulate evolution is an important problem in evolutionary biology~\cite{Kong:2025biodiversity}. Reticulate evolution includes such evolutionary processes as recombination, hybridization, horizontal gene transfer, and reassortment~\cite{Huson:2011survey}. Developing methods that identify and quantify such events is critical to our understanding of evolution broadly and the impact of reticulate evolution in particular (cf.~\cite{Awadalla:2003recombination,Arnold:2006primates,Hibbins:2021tomatoes}). However, such development is compounded by the computational complexity of network inference problems~\cite{Bordewich:2007hybrid,Kong:2022net-classes,Markin:2019rf-net} and the challenge to distinguish between tree discrepancy due to errors/incomplete lineage sorting/convergence and discrepancy due to true reticulation~\cite{Solis:2017phylonetworks}.

Recently, three new methods for encoding phylogenetic trees as compact vectors were introduced, including the HOP-encoding~\cite{Chauve:2025HOP}, Phylo2Vec~\cite{Penn:2025phylo2vec}, and Ordered Leaf Attachment (OLA)~\cite{Richman:2025OLA}. All three methods encode binary rooted phylogenetic trees into $O(n)$-length vectors, where $n$ represents the number of leaves, and depend on a fixed ordering of the leaves/taxa. OLA and Phylo2Vec encodings induce natural distance metrics between phylogenetic trees over the same taxon set: the Hamming distance between the corresponding vectors. The HOP-distance definition is more involved as it relies on the notion of longest common subsequences. The OLA Hamming distance is the fastest to compute as it requires $O(n)$ time. Notably, Linz et al.~\cite{Linz:2025order} recently showed that for all three encodings, the corresponding distance, given an optimal leaf-ordering that minimizes that distance, is upper-bounded by the hybridization number of the trees. In the case of the HOP-distance, the minimum distance is equivalent to the hybridization number. Note that the hybridization number for two binary trees is the smallest number of reticulation events that is required to explain the topological difference between the trees, and is equivalent to the size of a maximum acyclic agreement forest (MAAF) minus 1 \cite{Baroni:2005bhybrid,Bordewich:2007hybrid}. For more than two trees, the relationship between the hybridization number and the size of a maximum acyclic agreement forest is more complex~\cite{vanIersel:2014three}; however, if we allow reticulation events to have more than two parents, the size of a MAAF minus 1 still represents the smallest number of reticulation events required to explain the topological differences between trees~\cite{vanIersel:2013hyb-multi}. Therefore, from now on, given $k > 1$ trees, we will refer to the size of MAAF minus 1 as the \emph{reticulation number} to distinguish it from the hybridization number for $k > 2$.

In this work, we explore the connection between OLA vectors and the maximum acyclic agreement forests for two or more trees. The OLA encoding is arguably the most convenient to work with of the three encodings described above. This is facilitated by simple linear-time encoding and decoding algorithms for OLA vectors and OLA's interpretability~\cite{Richman:2025OLA}. Given a set of $k$ rooted binary trees $\T$ over $n$ leaves and a fixed leaf-ordering, we define the Hamming OLA distance between these trees as the total number of indices where at least two vectors differ. We can show that this distance can be strictly smaller than the reticulation number for $\T$ (see Fig.~\ref{fig:correction}B). To address this, we define a corrected OLA distance, where, in addition to mismatched indices, we count certain consensus indices as mismatches as well. In particular, we count a leaf $y$ as a mismatch if it is placed above a previously mismatched node (see Fig.~\ref{fig:correction}). Note that the corrected OLA distance can still be computed in linear time. As one of the main results of the work, we show that under an optimal leaf-ordering, the corrected OLA distance is equivalent to the reticulation number. We demonstrate this result by exploring the connection between OLA vectors and Acyclic Agreement Forests (AAFs). Given an AAF of size $f$, we can define a leaf-ordering such that the corresponding OLA vectors have the corrected OLA distance (and Hamming distance) of at most $f - 1$. Similarly, given a fixed leaf ordering with the corrected OLA distance $o$, we can construct an AAF of size at most $o + 1$ in linear time.

Next, we expand these results to multifurcated trees. For a set of multifurcated trees, the reticulation number is defined as the smallest reticulation number among all possible binary resolutions of these trees. Given a fixed leaf-ordering, we propose a polynomial-time algorithm that optimally resolves the multifurcated input trees, minimizing the corrected OLA distance (Algorithm~\ref{alg:resolve}). We prove that given an optimal leaf ordering, the resulting resolved trees also have the smallest reticulation number among all possible tree resolutions. Our algorithm runs in $O(kn \cdot m\log m)$ time, where $m$ is the size of a largest multifurcation.

The above results imply that when a leaf-ordering is fixed, the corresponding (order-dependent) reticulation number can be computed in polynomial time (linear time for binary trees). This is particularly important for fast-evolving pathogens, such as RNA viruses, where the collection dates of the pathogens induce the natural ordering of the leaves. In such cases, assuming that ancestral genotypes are rarely sampled after the descendant genotypes, the reticulation events are straightforward to identify via OLA vectors. For other organisms, where the leaf-ordering is not readily available, we suggest the following sampling approach to upper-bound the reticulation number and find reasonable agreement forests: sample $X$ random leaf permutations and compute the minimum corrected OLA distance across the samples. Our proposed Algorithm~\ref{alg:resolve} also ensures that the influence of tree estimation error on the reticulation number and AAFs can be minimized via collapsing short branches and/or branches with low bootstrap support. Additionally, for fast-evolving pathogens, such as RNA viruses, Algorithm~\ref{alg:resolve} can be used on its own to optimally resolve polytomies. We implemented these algorithms in a tool called OLA-Net, available at \url{https://github.com/flu-crew/OLA-Net}.


\begin{figure}
    \centering
    \includegraphics[width=0.8\linewidth]{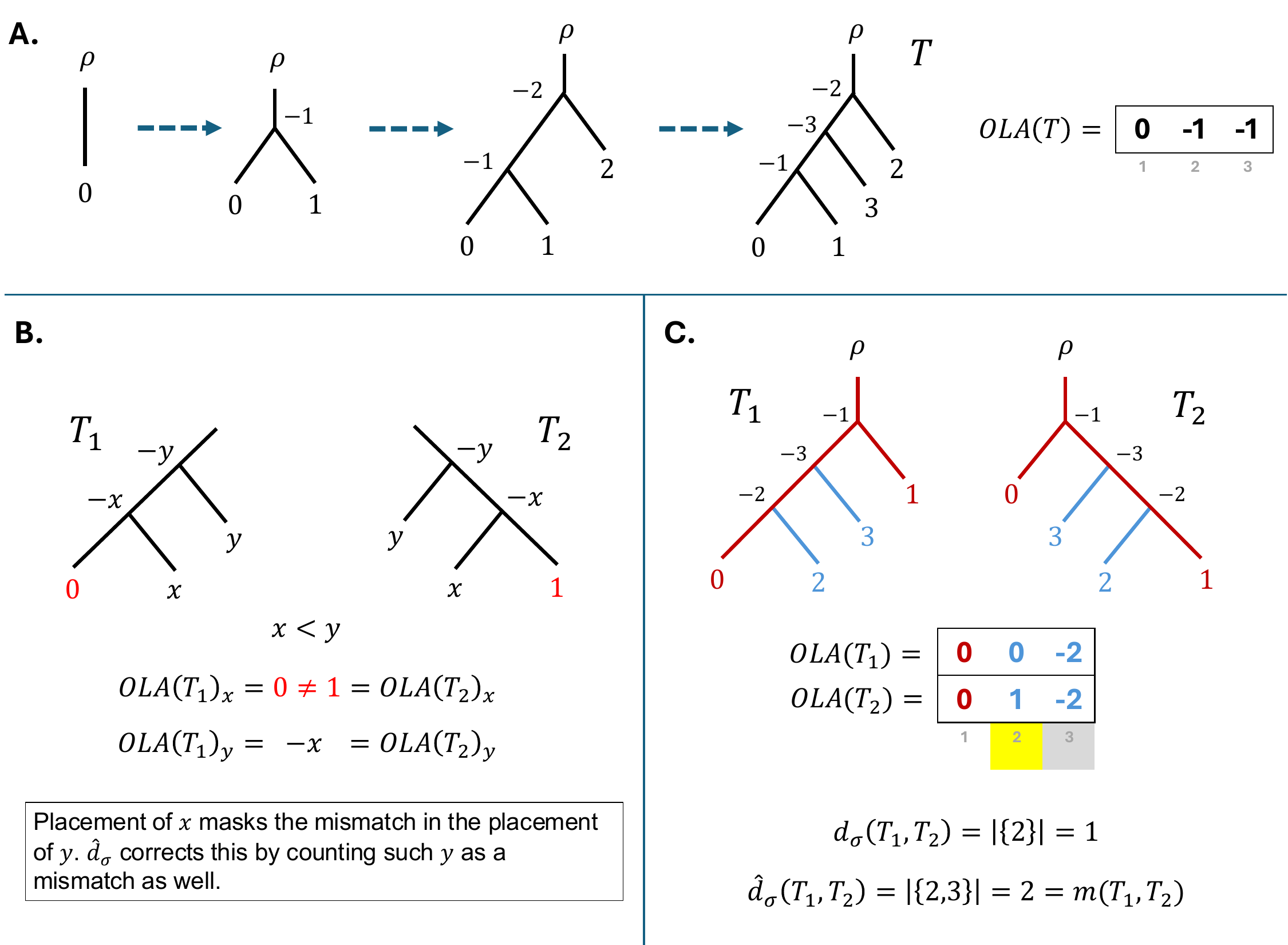}
    \caption{Illustrating the idea behind the corrected OLA distance. \textbf{A}: an example of how an OLA vector is constructed; the leaves are added in order, and the placement of each next leaf is recorded in the OLA vector. \textbf{B}: leaf $x$ precedes $y$ in the ordering, and is placed on the phylogenies first. The placement of $x$ causes a mismatch as it is placed above $0$ in $T_1$ and above $1$ in $T_2$. Although $y$ is essentially placed on the same edge as $x$, this placement does not cause a mismatch, as in both trees, $y$ is placed above node $-x$. In the corrected OLA distance, we count such $y$ as a mismatch. \textbf{C}: An example of two trees showing that the corrected OLA distance measures the reticulation number, whereas the original Hamming distance underestimates the reticulation number. In this example, $x = 2$ and $y = 3$.}
    \label{fig:correction}
\end{figure}

\section{Preliminaries}
A \emph{(phylogenetic) tree} over a leaf-set $L$ is a rooted and full binary tree $T = (V(T), E(T))$ with its leaves uniquely labeled (i.e., identified) by the elements of $L$. The trees are \emph{planted}, meaning that we add a single node above the root labeled by a unique label $\rho$. For convenience, we consider $\rho$ to be in $L$. The actual root of $T$ (the child of $\rho$) is denoted $r(T)$. 
For a subset $S \subset L$, $T(S)$ denotes the smallest subtree of $T$ that connects all nodes in $S$. The reduction $T|_S$ can be obtained from $T(S)$ by suppressing all non-labeled nodes with less than two children. Note that $T|_S$ or $T(S)$ are planted only if $\rho \in S$. For a tree $T$, we denote its set of labels by $L(T)$.

For a node $v$ in $T$ we denote its parent by $p(v)$ and its children, $ch(v)$, by $v_{(1)}$ and $v_{(2)}$ if such nodes exist. By $T_v$ we denote the subtree of $T$ rooted at $v$. The \emph{cluster} of $v$ is $cl(v) := L(T_v)$. The internal nodes of $T$ are denoted by $V_I(T) = V(T) \setminus L$.

In Section~\ref{sec:multi}, we allow trees to be non-binary or \emph{multifurcated}, i.e., each non-labeled node can have two or more children. We say that a binary tree $T'$ \emph{resolves} a non-binary tree $T$ if $L(T') = L(T)$ and there exists a set of edges $E \subset E(T')$ so that contracting these edges on $T'$ will result in the tree $T$.

\paragraph{Acyclic Agreement Forests.} Let $\F = \set{C_1, \ldots, C_k}$ be a forest with every component (tree) in $\F$ having a distinct set of labels (non-overlapping). We say that $\F$ \emph{agrees with} tree $T$ if

\begin{itemize}
    \item[(i)] For each $C_i \in \F$, $T|_{L(C_i)} \cong C_i$. That is, on the set of labels of $C_i$, $T$ is isomorphic to $C_i$.
    \item[(ii)] The trees $\set{T(L(C_i)) \mid C_i \in \F}$ are node-disjoint subtrees of $T$.
    \item[(iii)] $\displaystyle \bigcup_{C_i \in \F}L(C_i) = L(T)$.
\end{itemize}

\noindent For a set of trees $\mathcal{T}$ over label-set $L$, $\F$ is an \emph{agreement forest} for $\mathcal{T}$ if $\F$ agrees with every tree in $\mathcal{T}$.
An \emph{inheritance graph} $G(\T, \F)$ is a directed graph that has components $C_i \in \F$ as nodes. $G(\T, \F)$ has an edge $(C_i, C_j)$ if and only if there is a tree $T \in \T$ that contains a directed path from the root of $T(L(C_i))$ down to the root of $T(L(C_j))$. We say that $\F$ is an \emph{acyclic agreement forest} for $\T$ if $\F$ is an agreement forest for $\T$ and $G(\T, \F)$ does not contain directed cycles.

Finally, $\F$ is a \emph{maximum acyclic agreement forest} (MAAF) for $\T$ if there does not exist another acyclic agreement forest $\F'$ with fewer components than $\F$. MAAFs are strongly related to the \emph{hybridization number} problem. In particular, for two trees $T_1, T_2$ the hybridization number is $|\maaf(T_1, T_2)| - 1$. For a set of more than two trees, $\T$, the relationship is more complex, and $|\maaf(\T)| - 1$ is a lower bound on the hybridization number~\cite{vanIersel:2014three}. For any set of trees $\T$, we define $m(\T) := |\maaf(\T) - 1|$ to be the \emph{reticulation number} of $\T$. That is, the reticulation number is the smallest number of reticulation events required to explain the differences between the trees in $\T$ if we allow reticulations to have more than two parents.

\begin{figure}[!t]
    \centering
    \includegraphics[width=1\linewidth]{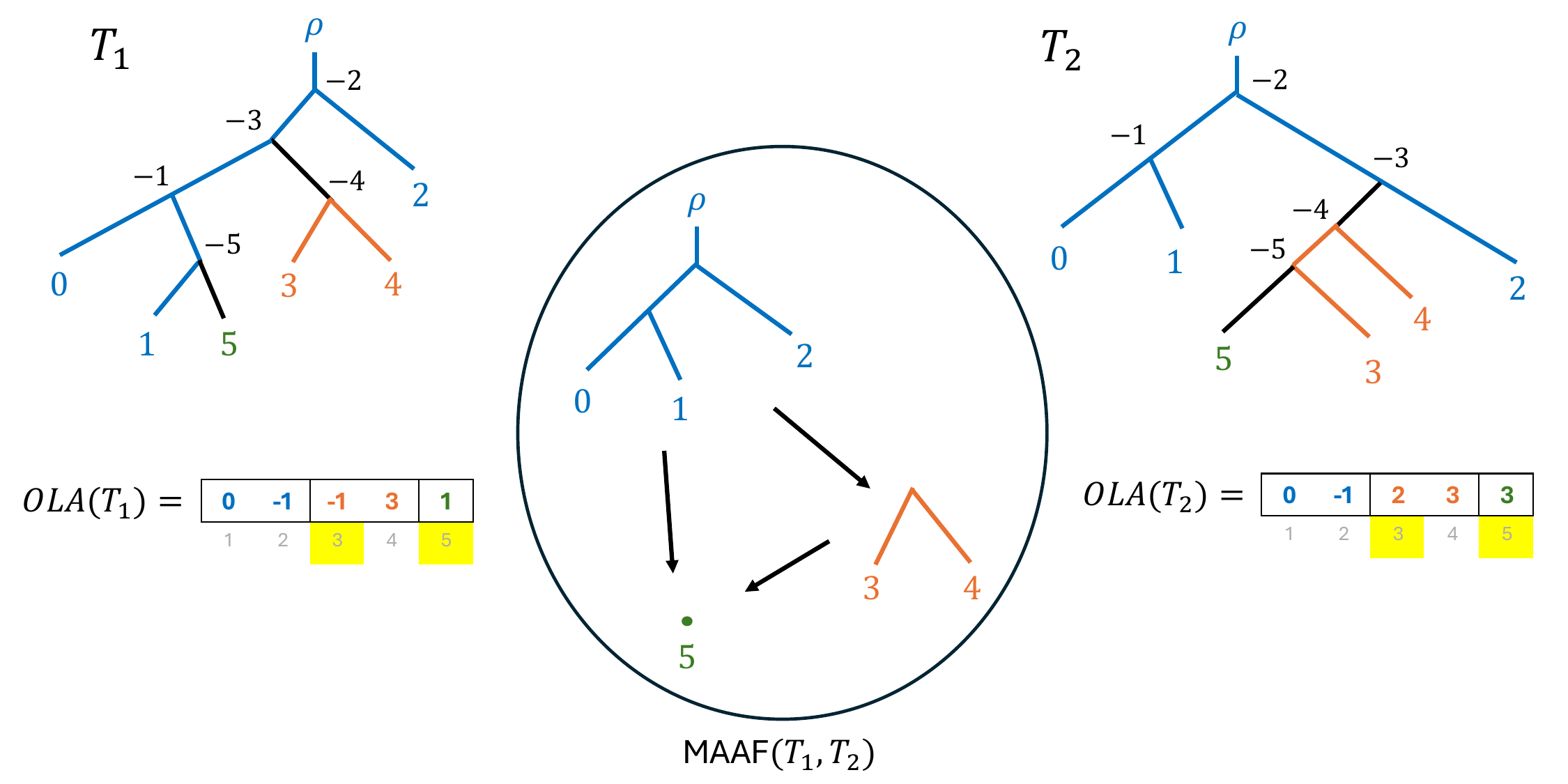}
    \caption{An example of a Maximum Acyclic Agreement Forest and optimal OLA-vectors for trees $T_1$ and $T_2$ over the leaf-set $\set{0, 1, \ldots, 5}$. The internal nodes of $T_1$ and $T_2$ are indexed by the OLA schema following the natural leaf-ordering $(0, 1, \ldots, 5)$. The optimal ordering of leaves follows the inheritance graph of $\F = \maaf(T_1, T_2)$ in the topological order. For each subtree $C$ of $\F$, only the entry corresponding to the first leaf of $C$ may differ between OLA-vectors; in this example it's leaves $3$ and $5$.}
    \label{fig:OLA-MAAF}
\end{figure}

\paragraph{Ordered Leaf Attachment.}

Let $(l_0, \ldots, l_{n-1})$ be a fixed ordering of leaves in label-set $L$, excluding $\rho$. In particular, $\sigma\colon L \setminus \set{\rho} \to \set{0, \ldots n-1}$ is a bijective mapping from a leaf to its 0-based index (i.e., $\sigma(l_i) = i$). For a tree $T$ over label-set $L$, we index the internal nodes of $T$ as follows. For each $v \in V(T)$, $\mu(v) := \min \set{\sigma(l) \mid l \in L(T_v)}$. Then, for each internal $v \in V(T) \setminus L$, $\sigma(v) :=  -\max\set{\mu(v_{(1)}), \mu(v_{(2)})}$. Here, $\sigma(v)$ is the index of an internal node $v$. That is, for a node $v$, we consider the smallest (earliest in the order) leaves within the two subtrees below $v$: let those leaves be indexed $i$ and $j$. If $i > j$, then $v$ gets assigned index $-i$ and otherwise index $-j$.

For $i \in \set{0, 1, \ldots, n-1}$, let $T^i$ be a shorthand for the tree $T|_{\set{l_0, \ldots, l_i}}$ that is a restriction of $T$ to the leaves indexed up to $i$. We define $v_i$ to be the sibling of $l_i$ in $T^i$. That is, $v_i$ is the node above which leaf $l_i$ is placed when we consider a step-by-step construction of $T$.  An OLA vector for tree $T$, $OLA(T,\sigma)$, is a vector of length $n-1$ such that for $i \in \set{1, \ldots, n-1}$
$$
    OLA(T, \sigma)_i := \sigma(v_i)
$$

Informally, the OLA vector can be seen as a process of building the tree in the order of its leaves. We start with a tree containing a single leaf $l_0$. Then, leaf $l_1$ gets added above $l_0$, and therefore the first entry in the vector is 0. The internal node that appears as a result of the addition of $l_1$ is labeled $-1$. Then, at each step $i > 1$, if a leaf $l_i$ is added above node $v_i$, then the index of $v_i$ is added to the vector, and the new internal node (i.e., the parent of $l_i$) is indexed as $-i$.

The vector has several important properties that make it very appealing for phylogenetic applications. First, the structure of the vector can be easily described as follows:
\[
 -(i-1) \le OLA(T, \sigma)_i \le  i - 1 
\]
for each $i \in \set{1, \ldots, n-1}$. Further, any integer vector that has this structure uniquely represents a phylogenetic tree. That is, Richman et al.~\cite{Richman:2025OLA} showed a bijection between the set of rooted binary trees over $n$ leaves and the set $\mathcal{C}_{n-1} := \set{(a_1, \ldots, a_{n-1}) \in \Z^{n-1} \mid -(i-1)\le a_i \le (i-1)}$. Importantly, both the encoding of a tree as a vector and the decoding of a tree from a vector can be performed in linear time.

Since all nodes of $T$ are uniquely indexed by $\sigma$, we sometimes refer to a node by its index (if an ordering $\sigma$ is fixed).

The OLA vectors induce a distance metric on the space of rooted binary phylogenetic trees: the Hamming distance between two OLA vectors. More formally, given a fixed ordering $\sigma$ and two trees $T_1$ and $T_2$, we define

\[
d_\sigma(T_1, T_2) := ||OLA(T_1, \sigma) - OLA(T_2, \sigma)||_0,
\]
where $||\cdot||_0$ is the $L_0$ norm.

\section{Correcting the OLA distance}
Under some orderings, the Hamming OLA distance, $d_{\sigma}(T, T')$, can be lower than the corresponding reticulation/hybridization number for $T$ and $T'$ (cf. Fig.~\ref{fig:correction}C). To correct for this, we define an adjusted distance measure $\hat{d}_\sigma$ that can still be easily computed in linear time. In particular, given two OLA vectors, we progressively build a set of ``mismatched'' indices $M$. Starting with an empty set, we iterate over $i \in \set{1, \ldots, n-1}$. We add $i$ to $M$ if 
\begin{itemize}
\item $OLA(T, \sigma)_i \ne OLA(T', \sigma)_i$, or 
\item $OLA(T, \sigma)_i = OLA(T', \sigma)_i = -j \text{, s.t. } j \in M$.
\end{itemize}
That is, in addition to regular mismatched entries of the OLA vectors, we count as mismatches those leaves that were added on top of the placements of already mismatched leaves. Intuitively, if we know that the original placement was a mismatch, then the placement one spot over it must also be a mismatch (Fig.~\ref{fig:correction}A). Then,
\[
    \hat{d}_{\sigma}(T, T') := |M|.
\]

We extend this \emph{corrected OLA distance} definition from two trees to any number of trees $k > 1$. For a set of trees $\T$ and an ordering $\sigma$, let $M$ be the set of mismatched indices, such that for each $i \in M$ either $\exists T, T' \in T \text{ s.t. } OLA(T, \sigma)_i \ne OLA(T', \sigma)_i$ or $\forall T \in \T : OLA(T, \sigma)_i = -j \text{ with } j \in M$. Then, $\hat{d}_{\sigma}(\T) := |M|$.

\section{Relationship between OLA vectors and acyclic agreement forests}\label{sec:maaf-bin}

In this section, we state and prove one of our main results, that the size of the maximum acyclic agreement forest for a set of trees equals their minimum (corrected) OLA distance plus 1. That is, the minimum corrected OLA distance is the reticulation number.

\begin{theorem}\label{thm:OLAk}
    Let $\T$ be a set of $k > 1$ trees over the same label-set $L$. Let $\sigma^*$ be the optimal leaf-ordering for these trees that minimizes the corrected OLA distance: $\displaystyle\sigma^* = \argmin_{\sigma} \hat{d}_{\sigma}(\T)$. Then,
    \[
        \hat{d}_{\sigma^*}(\T) = m(\T) = |\maaf(\T)| - 1.
    \]
\end{theorem}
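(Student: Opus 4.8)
The plan is to establish the two inequalities $\hat{d}_{\sigma^*}(\T) \le |\maaf(\T)| - 1$ and $\hat{d}_{\sigma^*}(\T) \ge |\maaf(\T)| - 1$ separately; since $m(\T) = |\maaf(\T)| - 1$ by definition, together they give the theorem. For the first inequality it suffices to produce, from a maximum acyclic agreement forest, a \emph{single} ordering whose corrected distance is at most $|\maaf(\T)| - 1$; for the second it suffices to produce, from the optimal ordering $\sigma^*$, an acyclic agreement forest with at most $\hat{d}_{\sigma^*}(\T) + 1$ components.

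For the upper bound I would argue as follows. Let $\F = \set{C_1, \ldots, C_f}$ be a MAAF for $\T$. Because $\F$ is acyclic, the inheritance graph $G(\T,\F)$ admits a topological order; I would take $\sigma$ to list the leaves of the $\rho$-containing component first (so that $l_0 \in L(C_1)$), then the leaves of the next component, and so on in that topological order, so that each $L(C_j)$ occupies a contiguous block. The crucial claim is that, inside the block of $C_j$, every OLA entry except possibly the one of the first leaf of the block is the same in all trees of $\T$ and, moreover, never triggers the correction rule. The reason this claim holds — and the reason the topological order is used — is a consequence of node-disjointness of agreement forests: if a leaf $\ell$ lies strictly below the root of $T(L(C_j))$ in some $T \in \T$, then all leaves of $\ell$'s component lie below that root (otherwise the subtrees $T(L(C_j))$ and $T(L(\text{comp. of }\ell))$ would share that node), so there is a $G(\T,\F)$-edge out of $C_j$ and $\ell$ comes after the whole block of $C_j$ in $\sigma$. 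Hence, when a leaf of $C_j$ other than the block's first leaf is attached, the only already-placed leaves below the root of $T(L(C_j))$ are earlier leaves of $C_j$, whose induced subtree equals $C_j$ restricted to those leaves and so is independent of $T$; this fixes the attachment location, and a short computation shows the internal nodes of the partially built copy of $C_j$ all carry index $-q$ with $q$ strictly larger than the block's first index, so they are not in $M$ and the correction does not fire. Granting the claim, each of the $f-1$ non-root blocks contributes at most one mismatch, so $\hat{d}_{\sigma}(\T) \le f - 1$.

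For the lower bound, fix $\sigma = \sigma^*$ with corrected mismatch set $M$ and $o := |M| = \hat{d}_{\sigma^*}(\T)$. I would build a partition of $L \setminus \set{\rho}$: start from singletons, and for each index $i \notin M$ (so $OLA(T,\sigma)_i$ is a value common to all $T \in \T$), merge the part of $l_i$ with the part of $l_{|OLA(T,\sigma)_i|}$. Since $l_i$ is still a singleton just before step $i$, this performs exactly $(n-1)-o$ genuine merges and leaves $o+1$ parts; putting $\rho$ in the part of $l_0$ gives a forest $\F$ with $|\F| = o+1$ and component trees $C := T|_{L(C)}$. Conditions (iii) and (i) of an agreement forest are then routine — coverage is immediate, and the isomorphism $T|_{L(C)} \cong C$ for every $T$ follows because every merge used an entry common to all trees and not a corrected one, so the incremental construction of $T|_{L(C)}$ is $T$-independent. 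Condition (ii) and acyclicity of $G(\T,\F)$ are the substantive part: for both I would prove that the \emph{first-leaf order} of the components (ordering $C$ by the least $\sigma$-index of a leaf of $C$) is a topological order of $G(\T,\F)$, i.e.\ every edge $(C,C')$ of $G$ goes from smaller to larger first-leaf index, from which disjointness of the subtrees follows by the same stratification.

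The main obstacle, I expect, is precisely this acyclicity step, and it is exactly where the \emph{correction} is needed: without it, an index $i$ with $OLA(T,\sigma)_i = -j$ in every $T$ and $j \in M$ would count as a match, so $l_i$ would be merged into the component whose first leaf is $l_j$ even though $l_i$ sits one step above $l_j$ in every tree while $l_j$ attaches at genuinely different places across $\T$ — this is the configuration of Fig.~\ref{fig:correction}B,C, and one checks it can make $G(\T,\F)$ acquire a directed cycle (equivalently, it is why the uncorrected distance can drop strictly below $m(\T)$). Showing that with the correction the first-leaf order really does orient every inheritance edge consistently — in particular handling leaves that are ``invaded'' by other (e.g.\ singleton) components in some trees but not others — is the technical heart of the argument; an alternative route would be an induction on $n$ that deletes $l_{n-1}$ and tracks how a MAAF and the mismatch set change, separating the cases $n-1 \in M$ and $n-1 \notin M$.
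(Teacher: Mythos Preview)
Your proposal is correct and follows essentially the same route as the paper: for the upper bound the paper also orders leaves by a topological sort of the MAAF's inheritance graph and proves (Lemma~\ref{lem:C_i}, Corollary~\ref{cor:C_i}, Lemma~\ref{lem:upper-bound}) that only each block's first leaf can mismatch and that no corrected mismatches arise; for the lower bound the paper builds the same partition---your union-find merging is equivalent to its $span(L_i)$ construction---and shows it is an AAF whose first-leaf order is topological (Lemmas~\ref{lem:L_i}, \ref{lem:acyclic} and Corollaries~\ref{cor:L_i}, \ref{cor:acyclic}). One small caution: node-disjointness (condition~(ii)) does not follow from the topological order of $G(\T,\F)$, since $G(\T,\F)$ is only defined once $\F$ is already an agreement forest; the paper establishes disjointness directly via the same incremental induction on partial trees $T^j$ that gives condition~(i) (Corollary~\ref{cor:L_i}), and your ``incremental construction is $T$-independent'' remark already contains the needed ingredient.
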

\medskip
The rest of the section is dedicated to the proof of Theorem~\ref{thm:OLAk}. \smallskip

\noindent $\mathbf{(\le)}$
We begin by showing that $\hat{d}_{\sigma^*}(\T) \le |\maaf(\T)| - 1$. Let $\F$ be an acyclic agreement forest for $\T$; we are going to create an ordering $\sigma$ such that $\hat{d}_{\sigma}(\T) \le |\F| - 1$. Since $G(\T, \F)$ is acyclic, let $O = (C_1, \ldots, C_k)$ be a topological ordering of trees in $\F$ according to $G(\T, \F)$. We then construct $\sigma$ to be an ordering that puts $L(C_1)$ leaves first, then $L(C_2)$, $L(C_3)$, etc. That is, $\sigma$ is any ordering that satisfies the following condition
\[
\forall l \in L(C_i) \text{ and } l' \in L(C_j) \text{ with } i < j : 0 \le \sigma(l) < \sigma(l') \le n - 1.
\]
Consider the first $|L(C_1)| - 1$ entries of each vector $OLA(T, \sigma)$. Since $C_1$ is an agreement subtree for all $T \in \T$, the first $|L(C_1)| - 1$ entries must be identical for all trees $T$. The next entry corresponds to the positioning of the subtree $C_2$ relative to $C_1$ in each tree $T$, and therefore, this position in the OLA vectors could differ. However, the following $|L(C_2)| - 1$ entries must match, and so on. More formally, we prove the following lemma by induction.

\begin{lemma}\label{lem:C_i}
    Consider $C_i \in \F$ for some $1 \le i \le k$, let $l_s$ be the first leaf in $L(C_i)$ according to $\sigma$ and let $l_j$ be \emph{some} leaf in $L(C_i)$. Recall that $T^j$ is a shorthand for $T|_{\set{l_0, \ldots, l_j}}$. Then, for any $T \in \T$ there exists a node $x \in V(T^j)$ such that $(T^j)_x = C_i|_{\set{l_s, \ldots, l_j}}$ and the nodes of $(T^j)_x$ are indexed identically across all $T \in \T$. In particular, for trees $T, T' \in \T$ and corresponding subtree roots $x, x'$, let $v \in V((T^j)_x)$ and $v' \in V((T'^j)_{x'})$ such that $L((T^j)_v) = L((T'^j)_{v'})$; then, $\sigma(v) = \sigma(v')$.
\end{lemma}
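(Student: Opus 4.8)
The plan is to split the lemma into a structural part — showing that $B := \{l_s, \dots, l_j\}$ is a clade of $T^j$ whose hanging subtree equals $C_i|_B$ — and an indexing part, which I would then read off from the structural part together with the definition of OLA indices. Although the paper announces an induction on $i$ (and one can phrase it that way), the essential content for $C_i$ does not use the statement for smaller indices; it uses only that the ordering $\sigma$ places all of $L(C_1),\dots,L(C_{i-1})$ before any leaf of $C_i$, together with node-disjointness and acyclicity of $G(\T,\F)$. The first observation is therefore that $B = L(C_i)\cap\{l_0,\dots,l_j\}$ and that all of $L(C_1),\dots,L(C_{i-1})$ already appear in $T^j$; when $i=1$ or $|B|=1$ the claims are immediate ($T^j=C_1|_{\{l_0,\dots,l_j\}}$ in the first case; $x=l_s$ and $\sigma(l_s)=s$ in the second).

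The heart of the argument is that $B$ is a clade of $T^j$ when $|B|\ge 2$. Set $x:=\mathrm{lca}_T(B)$; since $x$ lies on a path between two leaves of $C_i$, we have $x\in V(T(L(C_i)))$. It suffices to prove $cl_T(x)\cap\{l_0,\dots,l_j\}=B$, i.e.\ that no leaf $l\in L(C_{i'})$ with $i'<i$ is a proper descendant of $x$ in $T$. Suppose some such $l$ is, and let $z:=\mathrm{lca}_T(L(C_{i'}))$ be the root of $T(L(C_{i'}))$; as $x$ and $z$ are both ancestors of $l$ they are comparable. If $z\preceq_T x$ (this covers $|L(C_{i'})|=1$, where $z=l$), then $z\neq x$ by node-disjointness, so the root of $T(L(C_i))$ — being an ancestor of $x$ — is a strict ancestor of $z$; hence $G(\T,\F)$ contains the edge $(C_i,C_{i'})$, contradicting that $C_{i'}$ precedes $C_i$ in the topological order $O$. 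If instead $z\succ_T x$, then $x$ lies on the $z$-to-$l$ path in $T$, which is contained in $T(L(C_{i'}))$, so $x\in V(T(L(C_i)))\cap V(T(L(C_{i'})))$, contradicting node-disjointness. Either way we get a contradiction, so $B$ is a clade of $T^j$, and $x$ survives the restriction (it is the $\mathrm{lca}$ of at least two present leaves).

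Once $B$ is a clade with root $x$, we have $(T^j)_x=T^j|_B$; by transitivity of restriction and $T|_{L(C_i)}\cong C_i$ (which is equality since the label-sets coincide), $T^j|_B=T|_B=(T|_{L(C_i)})|_B=C_i|_B$. Hence $(T^j)_x=C_i|_B$ for every $T\in\T$, so all these subtrees are literally the same labeled tree. For the indexing claim, the key point is that the OLA index of a node $v\in V((T^j)_x)$ is determined by $(T^j)_v$ alone: if $v=l_t$ then $\sigma(v)=t$, and if $v$ is internal then $\sigma(v)=-\max\{\mu(v_{(1)}),\mu(v_{(2)})\}$, where each $\mu(v_{(\cdot)})$ is the smallest leaf-index in that child's subtree — and this minimum is unchanged by the leaves of index $>j$ present in $T$ but absent from $T^j$, since $cl_{T^j}(v)\subseteq B$ contains only indices $\le j$. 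As $(T^j)_x=C_i|_B$ is the same tree for all $T\in\T$, two nodes $v,v'$ with $L((T^j)_v)=L((T'^j)_{v'})$ are the same node of that common tree, so $\sigma(v)=\sigma(v')$, which gives the ``in particular'' statement and hence the lemma.

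I expect the main obstacle to be keeping the case analysis in the second paragraph airtight: precisely handling the comparability of $x$, $z$, and the roots of $T(L(C_i))$ and $T(L(C_{i'}))$, and verifying that each branch genuinely contradicts either acyclicity (through $O$) or node-disjointness — including the degenerate cases where some $C_{i'}$ is a single leaf. The remaining technical point, that the OLA index of a node of the clade agrees whether computed in $T$ or in $T^j$, is essentially the well-definedness of the OLA encoding from Richman et al.\ and can be cited rather than re-derived.
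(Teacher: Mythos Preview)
Your proof is correct. The paper's argument is organized differently: it proceeds by induction on $j$ (not on $i$, as you surmised), with base case $j=s$ and inductive step showing that the sibling $v_j$ of $l_j$ in $T^j$ must lie inside $(T^{j-1})_{x_{j-1}}$ --- otherwise some leaf $l'\in L(C')$ below $v_j$ would force either an overlap of $T(L(C_i))$ with $T(L(C'))$ or an edge $(C_i,C')$ in $G(\T,\F)$, exactly the node-disjointness/acyclicity dichotomy you invoke. The new internal node $p(l_j)$ is then automatically indexed $-j$ in every $T$, and the induction carries the rest of the index agreement. Your route is more static: you fix $j$, prove directly via the $\mathrm{lca}$ argument that $B$ is a clade of $T^j$, and then read off the indexing from the fact that all the $(T^j)_x$ coincide as labeled trees. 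This cleanly decouples the structural statement from the OLA mechanics, at the cost of needing the separate observation that the OLA index of a node in a clade of $T^j$ is the same whether computed in $T$ or in $T^j$; the paper's inductive version gets this for free, since at each step the only new internal node is $p(l_j)$, visibly indexed $-j$ everywhere.
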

\begin{proof} Base case: $l_j = l_s$. In this case, $x = l_j$ with $(T^j)_x$ being a trivial subtree with one leaf. Further, $l_j$ is consistently indexed by $j$ across all trees $T \in \T$.

Consider now some $j > s$ and we assume that the statement holds for all $j' \in \set{s, \ldots, j-1}$. Let $x_{j-1}$ denote the root of subtree $C_i|_{\set{\l_s, \ldots, l_{j-1}}}$ in $T^{j-1}$.
Consider now node $v_j$ that is a sibling of $l_j$ in $T^j$. Assume that $v_j$ is not in $(T^{j-1})_{x_{j-1}}$. Then, $v_j$ must have at least one leaf $l' \not \in L(C_i)$ below it in $T^{j-1}$. If $l' \in L(C')$, then either $T^{j}(L(C_i))$ overlaps with $T^{j}(L(C'))$ (which is a contradiction) or the root of $T^{j}(L(C_i))$ is above the root of $T^{j}(L(C'))$, since both roots must be located on the path from $l'$ to $\rho$. The latter case implies that $C_i$ is above $C'$ in $G(\T, \F)$, which is also a contradiction, since $C_i$ must follow $C'$ in the topological ordering. That is, $v_j$ must be a node in $(T^{j-1})_{x_{j-1}}$.

If $v_j = x_{j-1}$, then the parent of $l_j$ becomes the new root of the subtree $C_i|_{\set{l_s, \ldots, l_j}}$. Otherwise, the parent of $l_j$ is a new internal node in that subtree that is still rooted at $x_{j-1}$. In both cases, the parent al $l_j$ gets consistently indexed as $-j$ across all $T \in \T$.
\end{proof}

\begin{corollary}\label{cor:C_i}
    Let $C_i \in \F$ for some $1 \le i \le k$ and let $l_j \in L(C_i)$ be a leaf that is not the first leaf in $C_i$ (i.e., $l_{j-1} \in L(C_i)$). Then, $OLA(T, \sigma)_j = OLA(T', \sigma)_j$ for all $T, T' \in \T$.
\end{corollary}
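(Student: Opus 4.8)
\noindent\textbf{Proof proposal for Corollary~\ref{cor:C_i}.}
The plan is to obtain this as an immediate consequence of Lemma~\ref{lem:C_i}, so the work is almost entirely bookkeeping. First I would note that the hypothesis $l_{j-1}\in L(C_i)$ forces $j>s$, where $l_s$ is the first leaf of $C_i$ under $\sigma$; equivalently, $l_j$ is not the first leaf of $C_i$, so it genuinely has a sibling inside the subtree $C_i|_{\set{l_s,\ldots,l_j}}$. Since $j>s$, the entry $OLA(T,\sigma)_j$ is defined for every $T\in\T$, and by definition $OLA(T,\sigma)_j=\sigma(v_j)$, where $v_j$ is the sibling of $l_j$ in $T^j$; it therefore suffices to show that $\sigma(v_j)$ does not depend on which $T\in\T$ we chose.

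The core step is to apply Lemma~\ref{lem:C_i} at index $j$. It gives, for each $T\in\T$, a node $x=x_j$ with $(T^j)_{x}\cong C_i|_{\set{l_s,\ldots,l_j}}$ — the \emph{same} labelled tree for all $T$ — and with all nodes of $(T^j)_{x}$ indexed identically across $\T$. The parent of $l_j$ in $T^j$ is a node of $(T^j)_x$ (it is the new internal node indexed $-j$), hence so is its other child $v_j$. Now, in the fixed abstract tree $C_i|_{\set{l_s,\ldots,l_j}}$ the leaf $l_j$ has a uniquely determined sibling, namely the node whose cluster is some fixed set $S_j\subseteq\set{l_s,\ldots,l_{j-1}}$ independent of $T$. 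Thus in every $T$ the node $v_j$ is precisely the node of $(T^j)_{x}$ with cluster $S_j$, and the ``in particular'' clause of Lemma~\ref{lem:C_i} (equal clusters inside these copies $\Rightarrow$ equal indices) yields $\sigma(v_j)=\sigma(v'_j)$ for all $T,T'\in\T$. Hence $OLA(T,\sigma)_j=OLA(T',\sigma)_j$, as claimed.

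I do not anticipate a real obstacle: the only points requiring a moment of care are (i) that $v_j$ actually lies inside the copy of $C_i|_{\set{l_s,\ldots,l_j}}$ rather than straddling it together with a leaf outside $L(C_i)$ — but this is exactly what the inductive step in the proof of Lemma~\ref{lem:C_i} already rules out, via the node-disjointness of the $T(L(C_j))$ and the acyclicity of $G(\T,\F)$ combined with $O$ being a topological order — and (ii) that identifying $v_j$ by its cluster $S_j$ is legitimate, i.e.\ that $S_j\subseteq L(C_i)$ with all its indices $\le j-1$, so the restriction to $\set{l_0,\ldots,l_j}$ and the step from $T^{j-1}$ to $T^j$ do not change it. Both follow directly from the definitions of $\mu$, $\sigma$, $cl$, and the reduction operation, with no new ideas needed; the corollary is a genuine corollary of Lemma~\ref{lem:C_i}.
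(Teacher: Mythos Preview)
Your proposal is correct and follows essentially the same approach as the paper: apply Lemma~\ref{lem:C_i} to place $v_j$ inside the copy of $C_i|_{\set{l_s,\ldots,l_j}}$ and then read off that its index is the same in every $T\in\T$. The paper's own proof is the two-line version of exactly this (``$v_j$ must be identically indexed across all $T\in\T$''), and your added bookkeeping about identifying $v_j$ via its cluster $S_j$ and invoking the ``in particular'' clause is a faithful unpacking of that sentence.
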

\begin{proof}
    Let $v_j$ be the sibling of $l_j$ in $T^j$. As was shown above, $v_j$ must be identically indexed across all $T \in \T$. Therefore, $OLA(T, \sigma)_j = OLA(T', \sigma)_j$ for all $T, T' \in \T$.
\end{proof}

Corollary~\ref{cor:C_i} implies that the only mismatched indices across different OLA vectors could appear once per component $C_i \in \F$ and correspond to the first leaf in each component. Since the first leaf of $C_1$ (i.e., $l_0$) does not appear in OLA vectors, there could be at most $|\F| - 1$ mismatched indices. The only thing left to show is that when we compute the corrected OLA distance, we do not add any ``corrected'' mismatches.

\begin{lemma}\label{lem:upper-bound}
    For the chosen ordering $\sigma$, $\hat{d}_{\sigma}(\T) = d_{\sigma}(\T) \le m(\T) = |\F| - 1$.
\end{lemma}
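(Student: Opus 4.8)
The plan is to reduce everything to a single structural invariant maintained during the iterative construction of the mismatch set $M$: \emph{every index ever added to $M$ is the position of the first leaf (under $\sigma$) of some component $C_i \in \F$.} Granting this, the bound follows at once. The first leaf of $C_1$ is $l_0$, which has no entry in any OLA vector, so among the positions $\set{1, \ldots, n-1}$ there are exactly $|\F| - 1$ ``first-leaf'' positions (one per component $C_2, \ldots, C_k$), whence $|M| \le |\F| - 1$. Taking $\F = \maaf(\T)$ then gives $\hat d_\sigma(\T) = |M| \le m(\T)$. The same argument additionally shows that no ``corrected''/consensus index is ever inserted into $M$, so $M$ is precisely the set of Hamming-mismatched positions and $\hat d_\sigma(\T) = d_\sigma(\T)$.

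I would prove the invariant by induction on the iteration index $j = 1, \ldots, n-1$. Let $l_j$ lie in component $C_i$ with first leaf $l_s$. If $l_j$ is itself the first leaf of $C_i$ — necessarily $i \ge 2$, since the first leaf of $C_1$ is $l_0$ — then whether or not $j$ is added to $M$, the invariant is preserved trivially. If $l_j$ is not the first leaf of $C_i$, then by Corollary~\ref{cor:C_i} the entry $OLA(T,\sigma)_j$ is identical across all $T \in \T$, so no Hamming mismatch occurs at $j$; it remains only to rule out the correction rule. Here I would invoke Lemma~\ref{lem:C_i}: the sibling $v_j$ of $l_j$ lies inside $(T^{j-1})_{x_{j-1}} = C_i|_{\set{l_s, \ldots, l_{j-1}}}$. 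If $v_j$ is a leaf, then $OLA(T,\sigma)_j \ge 0$, and the correction rule — which fires only when the common value equals $-t$ for some $t \in M$ — does not apply. If $v_j$ is an internal node, say $v_j = -t$, then (reading the indexing off the proof of Lemma~\ref{lem:C_i}) the internal nodes of $C_i|_{\set{l_s, \ldots, l_{j-1}}}$ are exactly $\set{-t : l_t \in L(C_i),\, s < t \le j-1}$, so $l_t$ is a \emph{non-first} leaf of $C_i$; by the induction hypothesis $t \notin M$, and again the correction rule does not fire. Hence $j$ is not added to $M$, and the invariant holds after step $j$.

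The step I expect to require the most care is aligning the exact triggering condition of the correction rule with the node-indexing structure supplied by Lemma~\ref{lem:C_i} — specifically, confirming that when $v_j$ is an internal node of $C_i|_{\set{l_s,\ldots,l_{j-1}}}$ its index $-t$ has $t$ equal to the position of a non-first leaf of $C_i$ (so that the induction hypothesis genuinely excludes $t$ from $M$), together with the separate, harmless case $v_j$ a leaf. Neither point is deep, but both must be made explicit, because the entire content of the lemma is that the correction mechanism never introduces a spurious mismatch for this particular ordering. Once the invariant is established, the counting bound and the identity $\hat d_\sigma(\T) = d_\sigma(\T)$ require no further work.
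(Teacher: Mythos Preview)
Your argument for the inequality $\hat d_\sigma(\T) \le |\F| - 1$ is correct and is essentially the paper's: both rely on Lemma~\ref{lem:C_i}/Corollary~\ref{cor:C_i} to confine the sibling $v_j$ of any non-first leaf $l_j \in L(C_i)$ to the subtree $C_i|_{\{l_s,\ldots,l_{j-1}\}}$, whose internal nodes carry indices $-t$ with $l_t$ a non-first leaf of $C_i$, so that the correction rule cannot fire at $j$. The paper phrases this directly (defining $M$ as the full set of first-leaf positions and showing that no non-first $j$ has $OLA_j \in M^-$); you do the equivalent thing by induction on the iterative construction of $M$.

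There is, however, a gap in your justification of the equality $\hat d_\sigma(\T) = d_\sigma(\T)$. Your induction only shows that \emph{non-first} leaves are never added to $M$ via the correction rule; for a first leaf $l_j$ of some $C_i$ you explicitly allow ``whether or not $j$ is added to $M$, the invariant is preserved.'' That leaves open the possibility that a first leaf enters $M$ solely by correction (consensus value $-t$ with $t$ an earlier first-leaf position already in $M$), which would give $\hat d_\sigma > d_\sigma$. This can in fact occur: with $T_1 = (((0,2),3),1)$, $T_2 = (0,((1,2),3))$, the AAF $\F = \{\{\rho,0,1\},\{2\},\{3\}\}$, and $\sigma = (0,1,2,3)$, index~$2$ is a Hamming mismatch while index~$3$ has consensus value $-2$ and is added by correction, so $d_\sigma = 1$ but $\hat d_\sigma = 2 = |\F|-1$. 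The paper's own proof shares precisely this gap --- it too only treats $j \in I$ --- so the equality in the lemma statement appears to be an overclaim; fortunately only the inequality is needed downstream for Theorem~\ref{thm:OLAk}.
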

\begin{proof}
    Let $M = \set{j > 0 \mid l_j \in L(C_i)\ \&\ \nexists l_x \in L(C_i) \text{ with } x < j}$ be the set of the first indices in each of the components $C_i$ for $i > 0$. Further, let $M^{-} := \set{-j \mid j \in M}$.
    Then consider the set $I = \set{1, \ldots, n-1} \setminus M$. We claim that there is \emph{no} such $j \in I$ for which $\exists T \in \T$ such that $OLA(T, \sigma)_j \in M^{-}$.
    Consider any such $j$ and the corresponding component $C_i$ ($l_j \in C_i$). By Lemma~\ref{lem:C_i}, $l_j$ must have been placed within subtree $(T^{j})_x$ that consisted only of nodes in $C_i$. If $l_s$ is the first leaf in $C_i$, then $x$ must be below the node indexed by $-s$ and therefore $l_j$ could not have been placed above any node indexed by some $y \in M^{-}$.
\end{proof}

\medskip
\noindent$(\ge)$ Next, we demonstrate that $\hat{d}_{\sigma^*}(\T) \ge |\maaf(\T)| - 1$. Consider a fixed ordering of leaves $\sigma$. It is sufficient to show that we can construct an AAF $\F$ for $\T$ such that $|\F| - 1 \le \hat{d}_{\sigma}(\T)$.

Let $M = \set{m_1, \ldots, m_k}$ be the mismatched indices including the corrected mismatches. That is, for each $i \in M$ either $\exists T, T' \in \T$ s.t. $OLA(T, \sigma)_{i} \ne OLA(T', \sigma)_{i}$ or $OLA(T, \sigma)_{i} \in \set{-m \mid m \in M}$. Additionally, we set $m_0 = 0$. All other indices $I_c = \set{1, \ldots, n-1} \setminus \set{m_1, \ldots, m_k}$ must be ``consensus'' indices for which all OLA vectors agree. Next, we progressively define a subset of leaves $L_i$ for each $i \in \set{0, \ldots, k}$. We start with $L_i = \set{l_{m_i}}$ for $i > 0$ and $L_0 = \set{0, \rho}$. We define the span of $L_i$ to be

\[span(L_i) := \bigg( \bigcup_{l_j \in L_i \setminus \set{m_i, \rho}}\set{-j, j}\bigg) \; \cup \set{m_i}\]

Then, for each consensus index $j \in I_c$, in order, we add $l_j$ to $L_i$ if $OLA(T, \sigma)_j \in span(L_i)$. Note that such $L_i$ must exist and be unique. As a result, sets $\set{L_i}$ partition the label-set $L$. Fig.~\ref{fig:AAF-from-OLA} shows an example of such partitioning and a resulting AAF.

\begin{figure}[!t]
    \centering
    \includegraphics[width=0.9\linewidth]{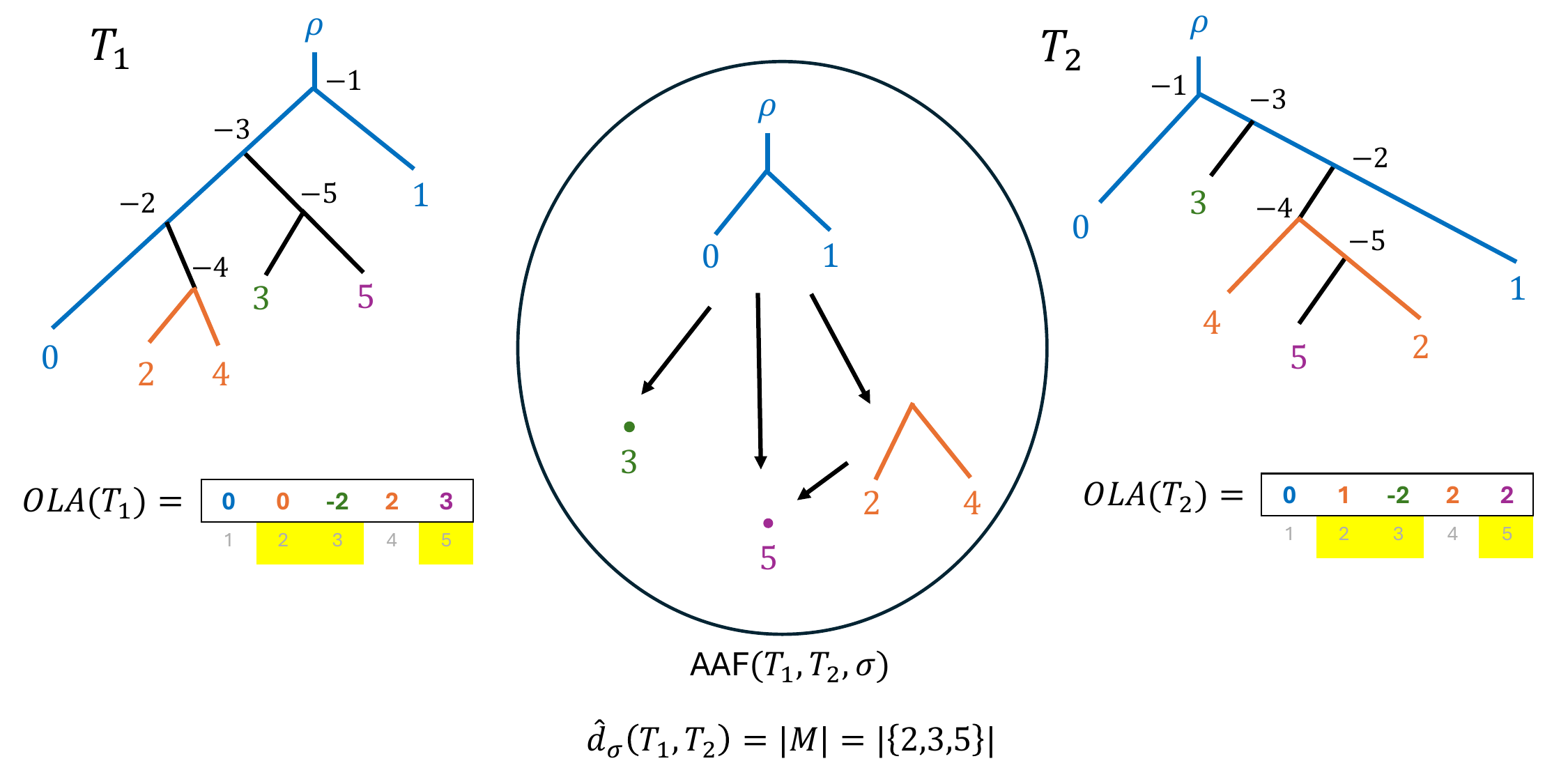}
    \caption{An example of constructing an AAF from the OLA vectors of two trees, $T_1$ and $T_2$. This example illustrates the proof of the lower-bound in Theorem~\ref{thm:OLAk}. We start with $L_0 = \set{\rho, 0, 1}$. Then, leaf $2$ is a mismatch and it creates $L_1 = \set{2}$. As leaf $3$ is placed above $-2$ in both trees with $-2 \in M^-$, it is also a mismatch and results in $L_2 = \set{3}$. Leaf $4$ has a consensus placement and it's in $span(L_1)$; hence, we add $4$ to $L_1$. Finally, leaf $5$ is a mismatch, creating $L_3 = \set{5}$.}
    \label{fig:AAF-from-OLA}
\end{figure}

\begin{lemma}\label{lem:L_i}
    Nodes in $T|_{L_i}$ (except for $\rho$) are bijectively indexed by $span(L_i)$ for all $T \in \T$. Further, for nodes $x \in V(T|_{L_i})$ and $x' \in V(T'|_{L_i})$ with $\sigma(x) = \sigma(x')$ the leaf-sets underneath these nodes are identical: $L_i \cap L(T_x) = L_i \cap L(T'_{x'})$.
\end{lemma}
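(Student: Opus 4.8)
The plan is to prove both assertions of the lemma simultaneously, by induction on the number of leaves of $L_i$ that have been placed, with a strengthened inductive hypothesis. List the leaves of $L_i$ in $\sigma$-order as $l_{j_0}, l_{j_1}, l_{j_2}, \dots$, so that $j_0 = m_i$ (when $i = 0$ one additionally has $\rho \in L_i$, but $\rho$ is present throughout and never lies below an internal node, so that case needs only cosmetic changes and I omit it). Write $L_i^{(r)} = \{l_{j_0}, \dots, l_{j_r}\}$ for the value of $L_i$ once its first $r{+}1$ leaves are in, and set $a_r := OLA(T, \sigma)_{j_r}$; since $j_r$ is a consensus index, $a_r$ takes the same value for every $T \in \T$, and $l_{j_r}$ was added to $L_i$ precisely because $a_r \in span(L_i^{(r-1)})$.

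The strengthened claim $(\star_r)$ would be: for every $T \in \T$, (A) $T|_{L_i^{(r)}}$ is the same $L_i^{(r)}$-labelled and $\sigma$-indexed tree $\tau^{(r)}$ for all $T \in \T$, and its node indices are exactly $span(L_i^{(r)})$; and (B) for each $t \le r$ with $l_{j_t} \ne l_{m_i}$, the node of $T^{j_r}$ with $\sigma$-index $-j_t$ — equivalently, the internal node created when $l_{j_t}$ was attached in the OLA construction of $T$ — has the same set of $L_i^{(r)}$-labels below it in $T^{j_r}$ as the $(-j_t)$-indexed node of $\tau^{(r)}$ has below it in $\tau^{(r)}$. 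Once $(\star_r)$ holds for all $r$, the first assertion of the lemma is part (A), and the second assertion follows since $\tau^{(r)}$ and its indexing do not depend on $T$. The base case $r = 0$ is immediate.

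For the inductive step I would first record a general restriction fact: if $u$ is the sibling of $l_j$ in $T^j$ and the subtree of $T^j$ rooted at $u$ still contains a leaf of a set $L'$ with $l_j \in L' \subseteq \{l_0, \dots, l_j\}$, then in $T|_{L'}$ the leaf $l_j$ attaches directly above the node whose cluster is $cl(u) \cap (L' \setminus \{l_j\})$, and the new internal node gets $\sigma$-index $-j$; this is immediate from the definition of $T|_{L'}$, using that every leaf already present has a smaller index than $l_j$. I apply this with $L' = L_i^{(r)}$ and $u = v_{j_r}$, the sibling of $l_{j_r}$ in $T^{j_r}$. By $(\star_{r-1})$(A), $a_r \in span(L_i^{(r-1)})$ is the index of a node $w_r$ of $\tau^{(r-1)}$, and using $(\star_{r-1})$(B) one checks that the set of $L_i^{(r-1)}$-labels below $v_{j_r}$ in $T^{j_r}$ equals the cluster of $w_r$ in $\tau^{(r-1)}$: this is trivial when $a_r \ge 0$, because then $v_{j_r}$ is the leaf $l_{a_r}$, and when $a_r = -j_t$ it is exactly $(\star_{r-1})$(B) for $t$, once one notes that the node in question gains no new $L_i$-label between steps $j_{r-1}$ and $j_r$ of the construction of $T$. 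In particular the subtree below $v_{j_r}$ meets $L_i^{(r-1)}$, so the restriction fact shows that $\tau^{(r)}$ arises from $\tau^{(r-1)}$ by attaching $l_{j_r}$ directly above $w_r$, with the new node indexed $-j_r$; as $\tau^{(r-1)}$ and $a_r$ are $T$-independent, so is $\tau^{(r)}$, and its index set is $span(L_i^{(r-1)}) \cup \{j_r, -j_r\} = span(L_i^{(r)})$, giving (A). To re-establish (B) I would invoke that for two distinct nodes of $T^{j_r}$ whose indices lie in $span(L_i^{(r-1)})$, one is a strict descendant of the other iff the corresponding nodes of $\tau^{(r-1)}$ nest the same way — which follows from $(\star_{r-1})$(B) together with the elementary fact that two nodes of a tree are comparable iff their non-empty $L_i$-clusters are comparable, and that neither the ancestry among these nodes nor their $L_i$-clusters change between $T^{j_{r-1}}$ and $T^{j_r}$ — and then track that $l_{j_r}$ lands below a given older node of $T^{j_r}$ exactly when it lands below the matching node of $\tau^{(r-1)}$.

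I expect the cluster bookkeeping in part (B) to be the main obstacle: making rigorous that the step-by-step OLA construction of $T$, observed only on the leaves of $L_i$, coincides with the step-by-step OLA construction of the common tree $\tau$. This is exactly where the definition of $span$ pays off — the test $a_r \in span(L_i^{(r-1)})$ is precisely the guarantee that the sibling $v_{j_r}$ of $l_{j_r}$ in $T^{j_r}$ survives restriction to $L_i$ with a controlled cluster and index, instead of living inside a part of $T$ that $T|_{L_i}$ deletes altogether.
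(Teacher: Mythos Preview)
Your proposal is correct and follows essentially the same approach as the paper: both argue by induction along the OLA construction, showing that each new leaf $l_j$ with $\sigma(v_j)\in span(L_i)$ attaches above a node whose $L_i$-cluster is the same across all $T\in\T$, so that the restricted trees $T|_{L_i}$ and their $\sigma$-indices coincide. The paper inducts over all indices $j$ simultaneously for every $L_i$ and is terser, while you fix one $L_i$, induct over its leaves only, and make the cluster bookkeeping explicit via your strengthened hypothesis (B); the content of the two inductive steps is the same.
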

\begin{proof}
    We prove this by induction on partial trees $T^j$. The base case for $j=0$ holds trivially as $l_0$ is indexed by $0$ and $span(L_0) = \set{0}$ at the start.

    Consider now some $j > 0$; if $l_j = m_i \in M$. Then $span(L_i) = \set{m_i}$ and the statement holds similarly to the base case. Assume now $j \in I_c$ and is added to the set $L_i$. Consider the node $v_j$ that is a sibling of $l_j$ in $T^j$. Let $p$ denote the parent of $l_j$ and $v_j$ in $T^j$. Recall that $\sigma(v_j) \in span(L_i)$ by construction, and hence $v_j$ must have leaves from $L_i$ underneath it. Therefore, both children of $p$ contain a leaf from $L_i$ under them and hence $p \in V(T^j|_{L_i})$. Further, $\sigma(p) = -j$ and $-j \in span(L_i \cup \set{l_j})$ which confirms the bijection. Next, by the induction hypothesis, nodes $v_j$ have the same clusters $L_i \cap (T^j)_{v_j}$ across all trees $T \in \T$. As we add node $p$ directly above $v_j$, its cluster is $(L_i \cap (T^j)_{v_j}) \cup \set{l_j}$ in $T^j|_{L_i}$ (across all $T$), and clusters of all nodes $x \in V(T^j|_{L_i})$ above $p$ get expanded by one leaf.
\end{proof}

\begin{corollary}\label{cor:L_i}
    The forest $\F = \set{T|_{L_i} \mid i \in \set{0, \ldots, k}}$ for some $T \in \T$ is an agreement forest for $\T$.
\end{corollary}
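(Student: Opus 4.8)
My plan is to read conditions (i) and (iii) of the agreement-forest definition off facts already in hand, and to obtain the remaining condition (ii)---node-disjointness---by a fresh induction on the step-by-step construction of $T$, leaning on Lemma~\ref{lem:L_i} at the crucial step.

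Fix any $T \in \T$; we must check that $\F$ satisfies, with respect to $T$: (i) $T|_{L_i} \cong C_i$ for each $i$; (ii) the subtrees $\set{T(L_i) \mid i}$ are node-disjoint in $T$; and (iii) $\bigcup_i L_i = L(T)$. Condition (iii) is exactly the partition property already established when the $L_i$ were defined (with $\rho \in L_0$). For (i), recall that $C_i = T'|_{L_i}$ for a fixed $T' \in \T$, so it suffices to see $T|_{L_i} \cong T'|_{L_i}$ for all $T, T' \in \T$; this is immediate from Lemma~\ref{lem:L_i}, since in every tree the nodes of the restriction to $L_i$ carry the same index set $span(L_i)$, equally indexed nodes have the same $L_i$-cluster, and a phylogenetic tree is determined by its cluster set.

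The substance is (ii). I would prove, by induction on $j = 0, 1, \ldots, n-1$, that the subtrees $\set{T^j(L_i^{\le j}) \mid i}$ are pairwise node-disjoint, where $L_i^{\le j} := L_i \cap \set{l_0, \ldots, l_j}$ and $L_0^{\le j}$ always includes $\rho$; the base case $j = 0$ is trivial, and $j = n-1$ is exactly (ii) because $T^{n-1}(L_i^{\le n-1}) = T(L_i)$. In the inductive step, $T^j$ arises from $T^{j-1}$ by attaching $l_j$ above its sibling $v_j$, creating one new internal node $p$ (the parent of $l_j$ and $v_j$) while keeping every node of $T^{j-1}$. The geometric observation driving the argument is that $p$ lies on the edge of $T^j$ directly above $v_j$, so for a label set $S \not\ni l_j$ the node $p$ lies in $T^j(S)$ only when $v_j$ is a non-root node of $T^{j-1}(S)$; and the new singleton $T^j(\set{l_j}) = \set{l_j}$ is disjoint from all the other (old) subtrees. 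If $l_j$ is a mismatch $m_i$, then $L_i^{\le j} = \set{l_j}$; by the induction hypothesis $v_j$ lies in at most one subtree $T^{j-1}(L_{i'}^{\le j-1})$, so $p$ is absorbed into at most one subtree of the family, and disjointness persists. If $l_j$ is a consensus leaf assigned to $L_i$, then $OLA(T, \sigma)_j = \sigma(v_j) \in span(L_i)$, and Lemma~\ref{lem:L_i} applied at stage $j-1$, together with the definition of the node index, locates $v_j$ as a node of $T^{j-1}(L_i^{\le j-1})$ itself, hence (by the induction hypothesis) in none of the other subtrees; so $p$---which must join $T^j(L_i^{\le j})$, being the parent of $l_j \in L_i$---joins that subtree and no other.

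I expect the main obstacle to be the bookkeeping in this inductive step: ruling out that the single new internal node $p$ enters two different subtrees of the family. This reduces to converting the span membership $\sigma(v_j) \in span(L_i)$ into the structural statement that $v_j$ is a genuine node of the current component-$i$ subtree---a short case analysis according to whether $\sigma(v_j)$ is $m_i$, a positive consensus index, or a negative consensus index $-x$ (in the last case $v_j$ must be an ancestor of $l_x \in L_i$ lying at or below $\operatorname{lca}(L_i^{\le j-1})$, which follows from the definition of $\sigma$)---and to handling the degenerate attachment points with care, in particular when $v_j$ is the root of $T^{j-1}$ (so $p$ becomes the new child of $\rho$ and joins only the $\rho$-containing component) and when $v_j$ is the root, rather than an interior node, of a component subtree. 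The remainder is routine verification.
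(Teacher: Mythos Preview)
Your proposal is correct and follows essentially the same route as the paper's own proof: condition~(i) via Lemma~\ref{lem:L_i} (identical clusters force isomorphism), condition~(iii) from the partition, and condition~(ii) by induction on the partial trees $T^j$, using Lemma~\ref{lem:L_i} to place $v_j$ inside $T^{j-1}(L_i)$ in the consensus case. Your treatment of the mismatch step and of the boundary attachments is in fact slightly more careful than the paper's, which simply asserts that a fresh singleton cannot overlap and that the new parent $p$ could only land in a foreign subtree if $v_j$ already had---but the arguments coincide.
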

\begin{proof}
    We need to show that (i) $T|_{L_i} = T'|_{L_i}$ for all $T, T' \in \T$ and all $i$, and (ii) $\set{T(L_i)}$ are node-disjoint subtrees of any $T \in \T$.

    (i) follows directly from Lemma~\ref{lem:L_i} since trees $T|_{L_i}$ and $T'|_{L_i}$ must have the identical set of clusters.

    (ii) we can prove by induction over partial trees $T^j$. The base case for $j=0$ clearly holds, as only $L_0$ is non-empty. Consider now any $j > 0$. If $l_j = m_i \in M$, then a new set $L_i$ becomes non-empty, which contains only one leaf and $T^j(L_i)$ cannot overlap with any other subtree. Now assume that $j \in I_c$ and is added to the set $L_i$. Consider node $v_j$ that is a sibling of $l_j$ in $T^j$. By definition of $L_i$, $\sigma(v_j) \in span(L_i)$ and hence $v_j \in V(T^{j-1}(L_i))$ by Lemma~\ref{lem:L_i}. As we add $l_j$ on top of $v_j$, the subtree induced by $L_i$ grows by one internal node and one leaf. The parent of $l_j$ in $T^j$ may be a part of another subtree $T(L_x)$ ($x \ne i$) if and only if $v_j$ was a part of that subtree, which contradicts the induction hypothesis.
\end{proof}

\begin{lemma}\label{lem:acyclic}
    For some $T \in \T$, let $r_i$ be the root of subtree $T(L_i)$. Then, $T_{r_i}$ cannot contain any $l_j$ such that $j < m_i$.
\end{lemma}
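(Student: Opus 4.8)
The plan is to prove a stronger, ``incremental'' version of the statement by induction on the step‑by‑step construction of $T$, and then read off the lemma from the last step. Fix $T \in \T$ and $i \ge 1$ (for $i = 0$ the claim is vacuous, since $m_0 = 0$). Write $L_i^{(j)} := L_i \cap \set{l_0, \ldots, l_j}$. First I would record the elementary observation that, by the way $L_i$ is built (starting from $l_{m_i}$ and only ever adding consensus leaves processed in increasing index order), every leaf of $L_i$ has index at least $m_i$, and $l_{m_i}$ is the unique one of index exactly $m_i$; hence $L_i^{(j)} \ne \emptyset$ precisely when $j \ge m_i$, and for such $j$ I can let $r_i^{(j)}$ denote the root of $T^j(L_i^{(j)})$. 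The goal is then to show, for every $j \ge m_i$, that $(T^j)_{r_i^{(j)}}$ contains no leaf $l_t$ with $t < m_i$; since $L_i^{(n-1)} = L_i$ and $T^{n-1}$ coincides with $T$ away from the planted edge, the case $j = n-1$ gives exactly $T_{r_i}$ and hence the lemma.

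The base case $j = m_i$ is trivial: $L_i^{(m_i)} = \set{l_{m_i}}$, so $r_i^{(m_i)}$ is the leaf $l_{m_i}$ itself and $(T^{m_i})_{r_i^{(m_i)}}$ contains no leaf of index below $m_i$.

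For the inductive step at $j > m_i$, let $v_j$ be the sibling of $l_j$ in $T^j$ (so $l_j$ is attached directly above $v_j$), and split into two cases. If $l_j \notin L_i$, then $L_i^{(j)} = L_i^{(j-1)}$; since taking a lowest common ancestor of a fixed set of leaves commutes with inserting further leaves, $r_i^{(j)}$ is the same node as $r_i^{(j-1)}$, and $(T^j)_{r_i^{(j)}}$ is obtained from $(T^{j-1})_{r_i^{(j-1)}}$ either unchanged or by appending the single new leaf $l_j$, whose index $j$ exceeds $m_i$; the induction hypothesis then gives the claim. If $l_j \in L_i$, then $j$ is a consensus index, so $\sigma(v_j) \in span(L_i^{(j-1)})$ by construction of $L_i$, and Lemma~\ref{lem:L_i} forces $v_j$ to be a node of $T^{j-1}(L_i^{(j-1)})$ --- that is, $v_j$ is $r_i^{(j-1)}$ or a descendant of it. If it is a proper descendant, then $r_i^{(j)} = r_i^{(j-1)}$ and the subtree only gains $l_j$; if $v_j = r_i^{(j-1)}$, then the new parent of $l_j$ becomes $r_i^{(j)}$ one level up, and $(T^j)_{r_i^{(j)}}$ is $(T^{j-1})_{r_i^{(j-1)}}$ together with $l_j$. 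In all cases the claim for $j$ follows from the induction hypothesis and $j > m_i$.

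The part I expect to be the main obstacle is the bookkeeping of how the root $r_i^{(j)}$ moves as leaves are inserted. The crucial point is that $r_i^{(j)}$ can ascend only in the single scenario where the inserted leaf $l_j$ is itself in $L_i$ and attaches exactly at the current root $r_i^{(j-1)}$ --- and in that scenario the leaf being pulled in has index $j > m_i$, so it cannot violate the bound; in every other case the root is pinned, because the lowest common ancestor of a fixed leaf-set is insensitive to later insertions. A secondary delicate point is the appeal to Lemma~\ref{lem:L_i}: it is what guarantees that a consensus leaf joining $L_i$ attaches somewhere inside the already-constructed induced subtree $T^{j-1}(L_i^{(j-1)})$, rather than in an unrelated region of $T$, which is precisely what prevents $r_i^{(j)}$ from jumping elsewhere.
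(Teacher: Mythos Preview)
Your proof is correct and takes essentially the same approach as the paper: induction over the partial trees $T^j$, tracking that the root of $T^j(L_i)$ can only move---and its cluster only grow by the single new leaf---when a leaf of $L_i$ itself is inserted, which in turn relies on Lemma~\ref{lem:L_i} to pin the attachment point inside the current induced subtree. The paper's version is terser: it first reduces (via the node-disjointness of the $T(L_x)$ from Corollary~\ref{cor:L_i}) to showing only that no earlier mismatched leaf $l_{m_x}$ lies below $r_i$, and then runs the same induction, whereas you prove the full statement directly and explicitly handle the case $l_j \notin L_i$ that the paper leaves implicit.
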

\begin{proof}
    It is sufficient to show that $T_{r_i}$ does not contain any $m_x \in M$ with $x < i$. Then, the statement will follow since subtrees $T|_{L_i}$ are node-disjoint. We prove this by induction over partial trees $T^j$. The base case of $j = 0$ clearly holds. For any $j > 0$, if $l_j = m_i \in M$, then $T^j(L_i)$ is trivial and the statement holds. Consider now $j \in I_c$ that is added to set $L_i$. As we showed above, $l_j$ is added adjacently to a node in $T^{j-1}(L_i)$ and therefore the cluster of $r(T(L_i))$ only grows by one and cannot include any $m_x$ with $x < i$.
\end{proof}

\begin{corollary}\label{cor:acyclic}
    The agreement forest $\F = \set{T|_{L_i} \mid i \in \set{0, \ldots, k}}$ is acyclic.
\end{corollary}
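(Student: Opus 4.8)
The plan is to exhibit an explicit topological order on the components of $\F$ and thereby rule out directed cycles in $G(\T, \F)$. The natural candidate is the order $L_0, L_1, \ldots, L_k$ induced by the indices $m_0 = 0 < m_1 < \cdots < m_k$ of their first leaves, and the key tool is Lemma~\ref{lem:acyclic}.

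First I would unwind what an edge of $G(\T, \F)$ means in terms of leaf indices. Suppose $(L_i, L_j)$ is an edge with $i \ne j$. By definition of the inheritance graph, there is a tree $T \in \T$ with a directed path from $r_i := r(T(L_i))$ down to $r_j := r(T(L_j))$; since $i \ne j$ the two roots are distinct, so this path is nontrivial and $r_j$ lies strictly below $r_i$, giving $T_{r_j} \subseteq T_{r_i}$. The first leaf $l_{m_j}$ of $L_j$ lies in $L(T(L_j)) \subseteq L(T_{r_j})$, hence $l_{m_j} \in L(T_{r_i})$.

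Next I would apply Lemma~\ref{lem:acyclic} to the subtree $T_{r_i}$: it contains no leaf $l_{j'}$ with $j' < m_i$. Since $l_{m_j} \in L(T_{r_i})$, this forces $m_j \ge m_i$, and because the first-leaf indices are pairwise distinct and $i \ne j$, we get $m_j > m_i$. Thus every edge of $G(\T, \F)$ points from a component with a smaller first-leaf index to one with a larger first-leaf index.

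Finally, along any directed path in $G(\T, \F)$ the values $m_i$ strictly increase, so no directed path can return to its starting component; hence $G(\T, \F)$ contains no directed cycle, and $\F$ is an acyclic agreement forest for $\T$. I do not expect any real obstacle here, as the substance is already contained in Lemma~\ref{lem:acyclic}; the only point requiring a little care is observing that an edge $(L_i, L_j)$ with $i \ne j$ genuinely places $l_{m_j}$ strictly inside $T_{r_i}$, which holds precisely because the distinctness of the two roots makes the witnessing directed path nontrivial.
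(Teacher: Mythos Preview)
Your proposal is correct and follows essentially the same approach as the paper: both arguments use Lemma~\ref{lem:acyclic} to show that the ordering $L_0, L_1, \ldots, L_k$ (equivalently, increasing $m_i$) is a topological order of $G(\T,\F)$, by arguing that an edge $(L_i, L_j)$ would force $T_{r_i}$ to contain a leaf of $L_j$, hence $m_j > m_i$. Your write-up is a bit more explicit about why the two roots are distinct (via node-disjointness from Corollary~\ref{cor:L_i}) and why $l_{m_j}$ lands strictly inside $T_{r_i}$, but the substance is identical.
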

\begin{proof}
    Consider any $T(L_i)$ and $T(L_j)$ with $i > j$. For the root of $T(L_i)$, $r_i$, to be above of $r_j$, we need $T(L_j)$ to be nested within $T(L_i)$ which is prohibited by Lemma~\ref{lem:acyclic}. Therefore, $G(\T, \F)$ can never contain an edge from $T(L_i)$ to $T(L_j)$. Therefore, $(T_{L_i})_i$ is a topological ordering of $G(\T, \F)$, which means $G(\T, \F)$ is acyclic. 
\end{proof}

That is, we constructed an AAF $\F$ such that $|\F| - 1 = k = \hat{d}_{\sigma}(\T)$.

\section{Fast estimation of the reticulation number and phylogenetic network construction}
As we showed above, under an optimal ordering, the corrected OLA distance is equivalent to the reticulation number for a set of trees. This observation suggests a simple sampling algorithm that can upper-bound a reticulation number. Given a set of trees $\T$ over the same leaf-set, draw $X$ random permutations of the leaves and compute the minimum corrected OLA distance across these permutations. As OLA distances can be computed in linear time, this yields a very fast estimation algorithm for the reticulation number, and consequent construction of AAFs and phylogenetic networks. Additionally, straightforward parallelization of this algorithm means that a large value for $X$ can be chosen to yield better results.

Crucially, for some phylogenetic datasets, a natural ordering of leaves is defined by the order of sequence collection. This is particularly relevant for fast-evolving pathogens such as influenza A virus, SARS-CoV-2, and many other viruses and some bacteria. In this case, no random permutations are required, and the reticulation number can be estimated in linear time even on very large datasets with millions of strains.

For both scenarios, OLA vectors can be easily translated into an acyclic agreement forest and a phylogenetic network as described in the previous section. Thus, this approach yields a fast way to construct large phylogenetic networks.

\paragraph{Addressing topological errors.} One caveat with the estimation of the reticulation number and consequent network construction is that it can be very sensitive to even small topological errors in the trees~\cite{Markin:2022rf-net2}. One approach to reduce the influence of topological errors is to collapse all branches below a certain length threshold or below a fixed bootstrap support threshold. In this case, we need to extend the definition of OLA vectors to multifurcated trees, while maintaining the ability of OLA vectors to compute the reticulation number. In the next section, we show that this can be done.

\begin{algorithm}[!t]
    {\fontsize{9}{9}\selectfont
    \begin{algorithmic}[1]
        \State \textbf{Input:} Tree $T$ and ordering $\sigma$.
        \State \textbf{Output:} An OLA vector and a ``multi'' vector that notes leaves that contribute to multifurcations.
        \State Initialize $\texttt{OLA}, \texttt{multi}$ as vectors of size $n-1$
        \For{each node $v$ in postroder of $T$:}
            \If{$v$ is a leaf}
                \State $v.min = \sigma(v)$
            \Else
                \State $v.min = \min(\set{c.min \mid c \in ch(v)})$
                \State $\sigma(v) = -(\text{second smallest }c.min \text{ among the children})$
            \EndIf
        \EndFor

        \For{leaf $l_i$, $i>0$, in the reverse order of $\sigma$}
            \If{$p(l_i)$ is a multifurcation}
                \State $\texttt{OLA}_{i} = \sigma(p(l_i))$; $\texttt{multi}_i = True$
            \Else
                \State $\texttt{OLA}_{i} = \sigma(l_i.sibling)$; $\texttt{multi}_i = False$
            \EndIf
            \State Remove $l_i$ from $T$ and suppress a potential unifurcation at $p(l_i)$
        \EndFor
        \State \Return \texttt{OLA}, \texttt{multi}
    \end{algorithmic}}
    \caption{Tree preprocessing. We first assign indices to the internal nodes in a similar fashion to the original OLA algorithm, and we create a map (OLA vector) of where each leaf was added. For a leaf that creates a multifurcation, we record that multifurcated node as the placement node and mark it as a multifurcation in a ``multi'' vector.}
    \label{alg:preproc}
\end{algorithm}

\section{Extending OLA vectors to optimally resolve multifurcations}\label{sec:multi}
Given a set of not necessarily fully-resolved trees $\T = (T_1, \ldots, T_k)$ and a leaf ordering $\sigma$, we show how to resolve these trees while minimizing disagreement between them.

First, we build initial OLA vectors and identify the leaves, the addition of which creates multifurcations. We create vectors $\OLA(T)$ and $\multi(T)$ such that for $1 \le i \le n-1$,
\[
    \OLA(T)_i =
    \begin{cases}
        \sigma(v_i) \text{ if $p(l_i)$ is binary in $T^i$}\\
        \sigma(p(l_i)) \text{ otherwise}
    \end{cases}\;\;\;\;\;\;
    \multi(T)_i = \neg (\text{$p(l_i)$ is binary in $T^i$})
\]
This step can be performed in linear time (Algorithm~\ref{alg:preproc}). Next, we resolve the multifurcations across all trees $T \in \T$ by sequentially building resolved trees $T'$. For each multifurcated node in $T$, we maintain a subtree $B$ that represents the resolution of that multifurcation (i.e., collapsing all internal edges that do not lead to leaves in that subtree will restore the original multifurcation). Note that $B$ must be fully binary, and its leaves are the original children of the multifurcated node (cf. Fig.~\ref{fig:alg2}). The node indices of such a subtree are stored in a vector $\mset(T)$. That is, for an internal multifurcated node indexed $x$, the set $\mset(T)_x$ contains the indices of nodes in the subtree $B_x$ of $T'$ that resolves this multifurcation. Additionally, $\mroot(T)_x$ is the index of the root of $B_x$. Then, for every leaf index $i \ge 1$:
\begin{itemize}
    \item[(i)] We first add $l_i$ to all trees $T'$ where it does not cause a multifurcation (i.e., there is a unique placement). Let $P$ denote the set of placement indices of $l_i$ onto those trees. Note that if $|P| = 1$, then we have a potential consensus placement index $p \in P$.
    
    \item[(ii)] For each tree $T'$ where the addition of $l_i$ creates a multifurcation, let $x$ denote the index of that multifurcated node in the original tree $T$. If a consensus placement $p$ exists, then we check if $p \in \mset(T)_x$. If for one of the trees, $p$ is not in that set, we cannot avoid a mismatch at index $i$, and we add $l_i$ above the subtree-root, $\mroot(T)_x$.
    
    \item[(iii)] If $l_i$ creates a multifurcation across all trees $T$ (i.e., $P = \emptyset$), then we look at the intersection of the sets $\mset(T)_x$ across all $T \in \T$ (and exclude the parents of mismatched indices in $M^-$ from that intersection). If such an intersection is not empty, we can perform a consensus placement of $l_i$ across all trees.
\end{itemize}

The full algorithm for this approach is described in Algorithm~\ref{alg:resolve} and an example is given in Fig.~\ref{fig:alg2}.

\begin{algorithm}[!t]
    {\fontsize{9}{9}\selectfont
    \begin{algorithmic}[1]
        \State \textbf{Input:} Trees $\T$ and ordering $\sigma$.
        \State \textbf{Output:} Resolved tree $T'$ and OLA vector for each $T \in \T$.
        \State Initialize $T' = (\rho, l_0)$ for each $T \in \T$
        \State Get $\texttt{OLA}(T)$ and $\texttt{multi}(T)$ vectors via preprocessing (Alg.~\ref{alg:preproc}) for each $T$
        \State Initialize set $M^- = \emptyset$ // set of (negated) mismatched indices
        \State Initialize vectors $\texttt{m\_set}(T)$ (empty sets) and $\texttt{m\_root}(T)$ (integers) indexed from $-(n-1)$ to $n-1$
        \State // \textit{$\texttt{m\_set}(T)_i$ is a set of indices of a subtree representing a partially resolved multifurcation at node $i$ of $T$}
        \State // \textit{$\texttt{m\_root}(T)_i$ is the root of that subtree. We initialize $\mroot(T)_i = i$}
        \State Initialize integer vector $\leafof(T)$ indexed from $-(n-1)$ to $n-1$
        \State // \textit{$\leafof(T)_i = x$ if node $i$ is a leaf of a subtree that resolves the multifurcation at $x$ in tree $T$}
        \State // \textit{If $i$ is not a leaf of any such subtree, $\leafof(T)_i = 0$}
        \For{$i$ in $1\ldots n-1$}
            
            \State $\texttt{placements} = \set{\mroot(T)_{\texttt{OLA}(T)_i}) \mid \forall T \in \T \text{ if not } \texttt{multi}(T)_i}$ // {\it all resolved placements at this index}
            \State $\texttt{unresolved} = \set{T' \mid \forall T \in \T \text{ s.t. } \texttt{multi}(T)_i}$ // {\it all unresolved trees at this index}
            \State $\texttt{resolved} = \set{T' \mid \forall T\in \T} \setminus \texttt{unresolved}$
            \For{$T' \in \texttt{resolved}$}
                \State Add $l_i$ to $T'$ above node indexed $x=\mroot(T)_{\texttt{OLA}(T)_i}$ and set $\texttt{m\_set}(T)_{-i} = \set{i, -i, x}$; $\texttt{m\_root}(T)_{-i} = -i$
                \State If $y = \leafof(T)_x < 0:$ replace $x$ with $-i$ in $\mset(T)_{y}$ and set $\leafof(T)_{-i} = y$
                \State Set $\leafof(T)_i = \leafof(T)_x = -i$
            \EndFor
            \If{$|\texttt{placements}| > 1$}
                \State $p = \texttt{null}$; Add $-i$ to $M^-$
            \ElsIf{$|\texttt{placements}| = 1$}
                \State $p = \texttt{placements[0]}$
            \Else
                \State $\displaystyle I = \bigcap_{T \in \T}\texttt{m\_set}(T)_{x_T} \setminus M^-,$\ where $x_T = \texttt{OLA}(T)_i$ // \textit{common nodes across relevant multifurcations}\label{ln:inter}
                \If{$|I| = 0$}
                    \State $p = \texttt{null}$; Add $-i$ to $M^-$
                \Else
                    \State $p = \argmax_j\set{|j| \mid j \in I}$ // \textit{an index in $I$ with max absolute value}
                \EndIf
            \EndIf
            \For{$T' \in \texttt{unresolved}$}
                \State $x=\texttt{OLA}(T)_i$ // \textit{index of the multifurcated node}
                \If{$p \in \texttt{m\_set}(T)_x$}
                    \State $p_T = p$
                \Else
                    \State $p_T = \texttt{m\_root}(T)_x$; Add $-i$ to $M^-$
                \EndIf
                \State Add $l_i$ to $T'$ on top of node indexed $p_T$; Add $i$ and $-i$ to $\texttt{m\_set}(T)_x$; Set $\leafof(T)_{i} = x$
                \State If $p_T = \texttt{m\_root}(T)_x:$ set $\texttt{m\_root}(T)_x = -i$ // \textit{update the root}
                \State If $y = \leafof(T)_{p_T} < 0\ (x \ne y):$ replace $p_T$ with $-i$ in $\mset(T)_{y}$
                \State \;\;\; and set $\leafof(T)_{-i} = y, \leafof(T)_{p_T} = 0$
            \EndFor
        \EndFor
        \State Generate an OLA-vector for each resolved $T'$: Return resolved trees and vectors
    \end{algorithmic}}
    \caption{Multifurcation resolution. We progressively build resolved trees $T'$, at each index $i$ deciding if there is a way to place the leaf $l_i$ that is consistent across all trees.}
    \label{alg:resolve}
\end{algorithm}

\begin{figure}
    \centering
    \includegraphics[width=0.9\linewidth]{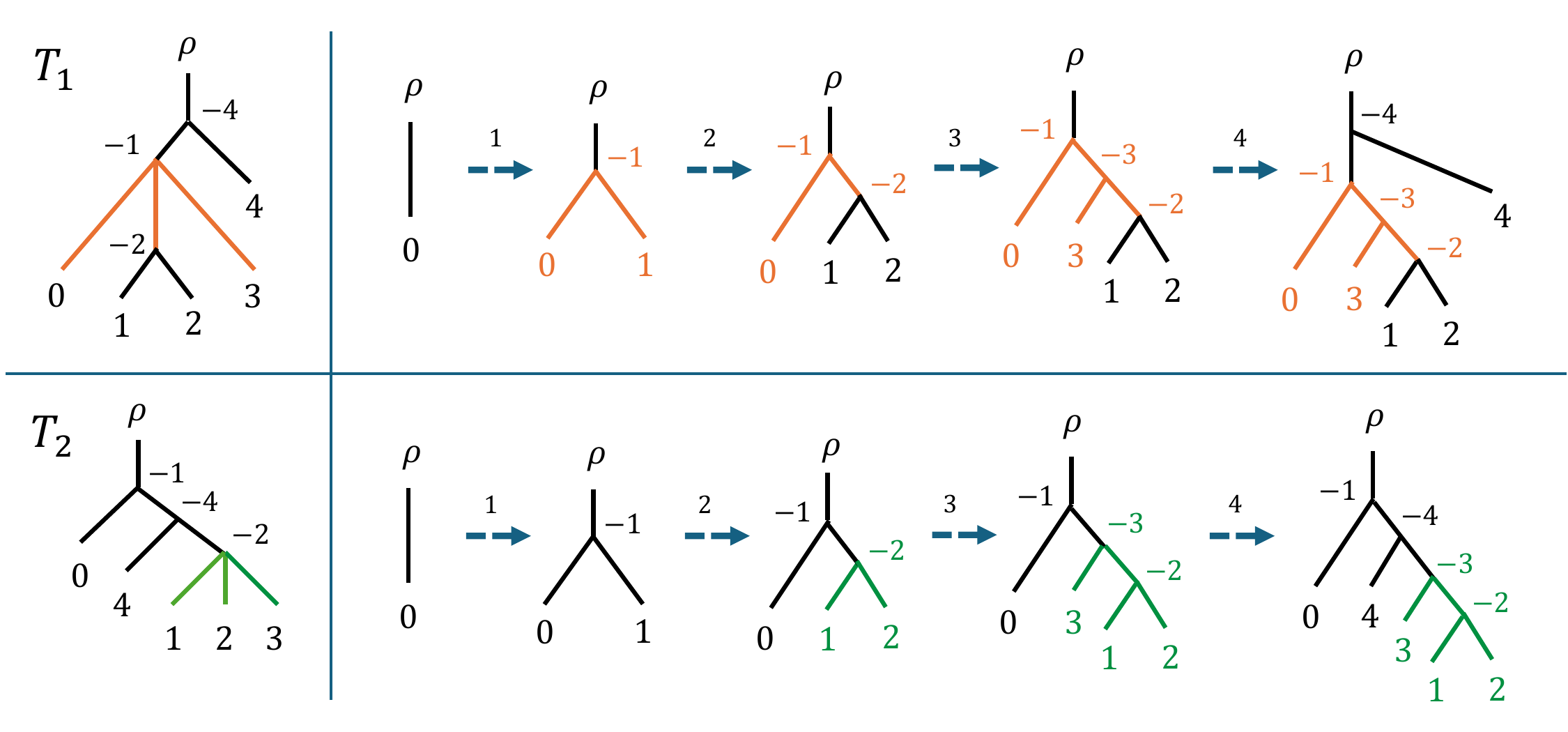}
    \caption{An example of tree resolution performed by Algorithm~\ref{alg:resolve} on a set of trees $\set{T_1, T_2}$. First, the nodes of both trees are indexed via Algorithm~\ref{alg:preproc}; $T_1$ has a multifurcation at node $-1$ and $T_2$ has a multifurcation at node $-2$. As we construct resolved trees, we keep track of all the nodes that are involved in the resolution of the multifurcations (highlighted in orange for $T_1$ and in green for $T_2$). Up to leaf $2$, all placements are binary and the partial trees are identical. For the non-binary leaf 3, we check the intersection of $\mset(T_1)_{-1} = \set{0,-1,-2}$ and $\mset(T_2)_{-2} = \set{1,-2,2}$, and we place $3$ on the common node $-2$. Finally, the placement of $4$ is binary in both trees.}
    \label{fig:alg2}
\end{figure}

\begin{lemma}\label{lem:mset-resolve}
    A tree $T'$ resolves tree $T$ if and only if for all $1 \le j \le n-1$, leaf $l_j$ is placed above $\mroot(T)_y$ if $p(l_j)$ is binary in $T^j$ and $y$ is the sibling of $l_j$ in $T^j$, or it is placed above a node in $\mset(T)_x$ if $p(l_j)$ is non-binary in $T^j$ and $x = \sigma(p(l_i))$. Note that, for convenience, we overload the notation so that $\mroot(T)_y = y$ if $y$ is a leaf.
\end{lemma}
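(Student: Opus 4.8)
\noindent\emph{Proof plan.}
The plan is to translate ``$T'$ resolves $T$'' into a statement about clusters and then verify it index by index along the leaf-by-leaf construction. Recall two standard facts: for rooted trees on the same leaf-set, $T'$ resolves $T$ iff every cluster of $T$ is a cluster of $T'$; and restriction commutes with resolution, so if $T'$ resolves $T$ then $(T')^j := T'|_{\set{l_0,\ldots,l_j}}$ resolves $T^j$ for every $j$ (contract in $(T')^j$ the images of the edges that are contracted to form $T$). Hence $T'$ resolves $T$ iff $(T')^j$ resolves $T^j$ for all $j$, which lets me reason one index at a time.

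I would prove the claim by induction on $j$, carrying a bookkeeping invariant. Along any leaf-by-leaf construction of a tree $T'$: for each node $x$ of $T$, let $B_x^j$ be the part of $(T')^j$ lying weakly between the node realizing the cluster $cl_{T^j}(x)$ and the nodes realizing the clusters of those children of $x$ that are already present in $T^j$; then $\mset(T)_x$ is the index-set of $B_x^j$ and $\mroot(T)_x$ its root (so while $x$ is still binary in $T^j$ the subtree $B_x^j$ is trivial, $\mset(T)_x = \emptyset$ and $\mroot(T)_x = x$; and $\mroot(T)_y = y$ for a leaf $y$, matching the overload). The invariant to maintain is: (i) $(T')^j$ resolves $T^j$; and (ii) for every node $u$ of $T^j$, the node of $(T')^j$ realizing $cl_{T^j}(u)$ has index $\mroot(T)_u$. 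Point (ii) rests on two observations: a binary node of $T$ keeps its OLA index in $T'$, because edge contractions only increase out-degree, so a degree-two node of $T$ is already a degree-two node of $T'$ with the same two children and hence the same index; and $\sigma(u) = \sigma_{T^j}(u)$ whenever $u$ is internal in $T^j$, since the two children of $u$ with smallest $\mu$ in $T$ are exactly the two with smallest $\mu$ in $T^j$. The base case $j=0$ is immediate.

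The inductive step is the crux and splits on whether $p(l_j)$ is binary in $T^j$. If it is, with sibling $y$: then $cl_{T^{j-1}}(y)$ and $cl_{T^{j-1}}(y)\cup\set{l_j}$ are both clusters of $T^j$, so $(T')^j$ must realize both. Since $(T')^{j-1}$ realizes $cl_{T^{j-1}}(y)$ at a unique node $w$ of index $\mroot(T)_y$ (by (ii)), attaching $l_j$ strictly below $w$ destroys the cluster $cl_{T^{j-1}}(y)$, while attaching $l_j$ above a strict ancestor of $w$ fails to produce the cluster $cl_{T^{j-1}}(y)\cup\set{l_j}$; the only valid placement is immediately above $w$, which is exactly the stated condition, and the update $\mroot(T)_{-j}=-j$ preserves (ii). If $p(l_j)=q$ is non-binary with OLA index $x$: now $cl_{T^{j-1}}(q)$ is \emph{not} a cluster of $T^j$, but the clusters of $q$'s children in $T^{j-1}$ are; hence $(T')^j$ resolves $T^j$ iff $l_j$ is inserted weakly inside $B_x^{j-1}$, i.e., above a node whose index is in $\mset(T)_x$. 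Indeed, any such insertion turns $B_x^{j-1}$ into a valid $B_x^j$ (any binary tree on $q$'s children in $T^j$ resolves $q$), whereas inserting $l_j$ outside $B_x^{j-1}$ either destroys a surviving cluster of $T^{j-1}$ (a child-cluster of $q$, or an unrelated one) or creates an internal node whose cluster refines no cluster of $T^j$. Adding $i,-i$ to $\mset(T)_x$ and rerooting then restores the invariant.

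I expect the main obstacle to be pinning down invariant (ii) cleanly across the three node types — leaves, already-binary nodes of $T$, and partially-revealed multifurcation nodes — and in particular verifying that the realization of a sibling's cluster is still tracked by $\mroot(T)_y$ when $y$ is itself the root of a deeper resolution subtree, together with the routine-but-fiddly $\leafof$ bookkeeping required when $l_j$ lands at a node of $B_x^{j-1}$ that is simultaneously a child-representative of another multifurcation. Once (ii) is secured, both implications of the lemma fall out of the per-step cluster analysis above.
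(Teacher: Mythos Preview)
Your proposal is correct and follows essentially the same route as the paper: induction on $j$, splitting on whether $p(l_j)$ is binary in $T^j$, while maintaining that $\mroot(T)_y$ indexes the node of $(T')^{j-1}$ realizing $cl_{T^{j-1}}(y)$ and that $\mset(T)_x$ indexes the current resolution subtree $B_x$. The only cosmetic difference is in the non-binary ``only if'' direction: the paper argues by edge contraction (exhibiting a node $u$ adjacent to the $p(l_j)$--$\mroot(T)_x$ path that would become an illegitimate child of $x$ after collapsing), whereas you argue by cluster preservation; these are two phrasings of the same obstruction, and your explicit invariant (ii) is exactly what the paper uses implicitly when it asserts that $\mroot(T)_y$ ``has the same cluster'' as $y$.
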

\begin{proof}
    We are going to prove this by induction on $j$. Base case: $j = 1$. In this case, $p(l_1)$ is binary, there is only one option of placement (above $0$), and $\mroot(T)_0 = 0$.

    Consider now some $j > 1$. If $p(l_j)$ is binary in $T^j$ and $y$ is the sibling of $l_j$ in $T^j$, then $\mroot(T)_{y}$ will have the same cluster in $(T')^{j-1}$ as $y$ in $T^{j-1}$ (by the induction hypothesis), and therefore it's the unique placement for $l_j$. Otherwise, if $x = p(l_j)$ is non-binary in $T^j$, consider a placement of $l_j$ above any node $p \in \mset(T)_{x}$. Recall that $\mset(T)_{x}$ is the set of indices in the subtree $B_x$ that resolved the multifurcation (or bifurcation) at $x$ in $(T')^{j-1}$. In this case, $B_x$ grows with new nodes indexed $j$ and $-j$, and it is easy to see that collapsing all internal edges of $B_x$ will restore the original multifurcation; that is, $(T')^j$ resolves $T^j$. Consider now any placement of $l_j$ above a node $p \not \in \mset(T)_{x}$. In $T^j$, consider the path $P$ from $p(l_j)$ to $\mroot(T)_{x}$, and let $u \ne l_j$ be a node such that $u \not \in \mset(T)_{x}$, $u \not \in P$, but $p(u) \in P$. Note that such a node is guaranteed to exist (either $p$ or a child of $p$ always satisfies these criteria). Then, to restore the original multifurcation at $x$, all edges along the path $P$ need to be collapsed; however, in this case, $u$ will become a child of $x$. Since $u$ could not have been an original child of $x$ in $T^j$, $(T')^j$ does not properly resolve $T^j$.
\end{proof}
\begin{corollary}\label{cor:alg2-resolve}
    Trees $\T'$ produced by Algorithm~\ref{alg:resolve} properly resolve the original trees $\T$.
\end{corollary}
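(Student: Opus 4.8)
The plan is to verify that every leaf placement performed by Algorithm~\ref{alg:resolve} on tree $T \in \T$ satisfies the hypothesis of Lemma~\ref{lem:mset-resolve}, which then immediately yields that the resulting $T'$ resolves $T$. So I would argue by induction on the leaf index $i$, maintaining the following invariant: after processing leaf $l_i$, for every internal node $x$ of $T$ that is a multifurcation (or was treated as one), the set $\mset(T)_x$ is precisely the index-set of the subtree $B_x$ of $(T')^i$ that currently resolves the corresponding (partially constructed) multifurcation at $x$, and $\mroot(T)_x$ indexes its root; moreover for any node $z$ of $(T')^i$, $\mroot(T)_{z}$ has the same cluster as $z$ has in $T^i$. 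With this invariant in hand, Lemma~\ref{lem:mset-resolve}'s placement condition becomes exactly what the algorithm enforces, so the corollary follows.

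The induction step splits along the three branches of the main loop. First, for $T' \in \texttt{resolved}$ (line where $\texttt{multi}(T)_i$ is false): here $p(l_i)$ is binary in $T^i$ with sibling $y = \texttt{OLA}(T)_i$, and the algorithm places $l_i$ above $\mroot(T)_y$; by the invariant $\mroot(T)_y$ has the cluster of $y$ in $T^{i-1}$, so this is exactly the unique placement required by Lemma~\ref{lem:mset-resolve}. Second, for $T' \in \texttt{unresolved}$, $p(l_i)$ is non-binary in $T^i$ with $x = \texttt{OLA}(T)_i$ the index of that multifurcation; the algorithm places $l_i$ above either the consensus $p$ (only when $p \in \mset(T)_x$) or above $\mroot(T)_x$, and since $\mroot(T)_x \in \mset(T)_x$ always, in both cases $l_i$ lands on a node in $\mset(T)_x = V(B_x)$, which is again precisely the condition in Lemma~\ref{lem:mset-resolve}. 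The third branch ($P = \emptyset$, consensus decided via the intersection $I = \bigcap_T \mset(T)_{x_T} \setminus M^-$) only affects which element $p$ is chosen, and since $p$ is then forced to lie in each $\mset(T)_{x_T}$, it reduces to the second case for every tree. I would then check the bookkeeping updates — adding $\{i,-i,x\}$ or $\{i,-i\}$ to $\mset(T)_x$, updating $\mroot(T)_x$ when the root changes, and the $\leafof(T)$ adjustments that handle the case where $x$ is itself a leaf of an enclosing multifurcation's subtree — preserve the invariant; this is the routine but slightly fiddly part: the $\leafof$ updates exist precisely to keep $\mset$ consistent when a node that was a leaf of one resolution subtree becomes the root of a freshly grown sub-subtree.

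The main obstacle I anticipate is the nested-multifurcation bookkeeping — when a child of multifurcation $x$ is itself the root of a subtree $B_{x'}$ resolving another multifurcation $x'$, adding $l_i$ inside $B_x$ above that child forces an index replacement ($x$ replaced by $-i$) in $\mset(T)_{x'}$, and one must confirm the invariant is maintained for \emph{all} ancestral $\mset$ sets simultaneously, not just the one directly touched. I would handle this by noting that $\leafof$ is a partial function pointing each node to the unique enclosing resolution subtree of which it is currently a leaf, so at most one such ancestral update is ever needed per placement, and verifying that the conditional branches in the algorithm cover exactly the cases $\leafof(T)_x < 0$ versus $\leafof(T)_x = 0$. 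Once these invariants are established, Corollary~\ref{cor:alg2-resolve} is immediate: for each $T \in \T$, every placement in the construction of $T'$ meets the criterion of Lemma~\ref{lem:mset-resolve}, hence $T'$ resolves $T$.
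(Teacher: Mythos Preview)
Your proposal is correct and follows the same approach as the paper: the corollary is stated without explicit proof as an immediate consequence of Lemma~\ref{lem:mset-resolve}, and your plan simply makes explicit what the paper leaves implicit---namely, that every placement performed by Algorithm~\ref{alg:resolve} (whether in the \texttt{resolved} branch via $\mroot(T)_y$, or in the \texttt{unresolved} branch via some node in $\mset(T)_x$) satisfies the lemma's criterion. Your additional care about the bookkeeping invariants for $\mset$, $\mroot$, and $\leafof$ is more detail than the paper provides, but it is exactly the right content to fill in.
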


From now on, by $OLA(T, \sigma)$ we refer to the $OLA(T', \sigma)$ vector where $T'$ is the resolution of $T$ according to Algorithm~\ref{alg:resolve}. Then, the distances $d_{\sigma}(\T)$ and $\hat{d}_{\sigma}(\T)$ are defined over the resolved vectors $OLA(T, \sigma)$.

\subsection{Algorithm~\ref{alg:resolve} is optimal given an optimal ordering}
The key property of Algorithm~\ref{alg:resolve} is that it maintains our ability to compute the reticulation number and construct MAAFs. Our main result in this section is Theorem~\ref{thm:multi}.

\begin{theorem}\label{thm:multi}
    Let $\T = \set{T_1, \ldots, T_k}$ be a set of (not necessarily binary) trees over the same label set $L$. Let $\sigma^*$ be an optimal leaf-ordering for these trees that minimizes the corrected OLA distance on trees resolved via Algorithm~\ref{alg:resolve}: $\displaystyle\sigma^* = \argmin_{\sigma} \hat{d}_{\sigma}(\T)$. Further, let $\T^* = \set{T_1^*, \ldots, T_k^*}$ be a set of optimal resolutions of trees in $\T$ that minimizes $|\maaf(\T^*)|$. Then,
    \[
        \hat{d}_{\sigma^*}(\T) = |\maaf(\T^*)| - 1 = m(\T).
    \]
    That is, given an optimal ordering, Algorithm~\ref{alg:resolve} will optimally resolve trees in $\T$, minimizing the size of the MAAF and therefore the reticulation number.
\end{theorem}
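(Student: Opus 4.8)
The plan is to reduce the multifurcated case to the binary case proved in Theorem~\ref{thm:OLAk}, using Corollary~\ref{cor:alg2-resolve} to guarantee that the trees $\T'$ produced by Algorithm~\ref{alg:resolve} are genuine resolutions of $\T$. The argument splits into two inequalities, just as before. For the easy direction, $\hat d_{\sigma^*}(\T) \ge m(\T)$: fix the ordering $\sigma^*$ and run Algorithm~\ref{alg:resolve} to obtain resolved trees $\T'$. Since each $T'_i$ resolves $T_i$, any acyclic agreement forest for $\T'$ is an acyclic agreement forest for $\T$ of the same size (an AAF for the resolution restricts to an AAF for the coarser tree because $T_i|_S$ is obtained from $T'_i|_S$ by contracting edges, and the inheritance-graph acyclicity is inherited). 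Applying the lower-bound construction from the $(\ge)$ half of Theorem~\ref{thm:OLAk} to the binary trees $\T'$ yields an AAF $\F$ for $\T'$ with $|\F| - 1 \le \hat d_{\sigma^*}(\T')= \hat d_{\sigma^*}(\T)$; hence $m(\T) \le m(\T') \le |\F|-1 \le \hat d_{\sigma^*}(\T)$.

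The harder direction is $\hat d_{\sigma^*}(\T) \le m(\T) = |\maaf(\T^*)| - 1$, where $\T^* = \{T_1^*,\dots,T_k^*\}$ is an optimal resolution. First I would fix a MAAF $\F^*$ for $\T^*$ and, exactly as in the $(\le)$ half of Theorem~\ref{thm:OLAk}, choose a leaf-ordering $\sigma$ that lists the components of $\F^*$ in a topological order of $G(\T^*,\F^*)$. The crucial claim is then that when Algorithm~\ref{alg:resolve} is run on the \emph{multifurcated} trees $\T$ with this $\sigma$, the resulting resolved trees $\T'$ have corrected OLA distance at most $|\F^*| - 1$; combined with the definition of $\sigma^*$ as a minimizer, this gives $\hat d_{\sigma^*}(\T) \le \hat d_{\sigma}(\T) \le |\F^*| - 1 = m(\T)$. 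To prove the claim I would establish a multifurcated analogue of Lemma~\ref{lem:C_i} and Corollary~\ref{cor:C_i}: for each component $C_i \in \F^*$ with first leaf $l_s$, every non-first leaf $l_j \in L(C_i)$ receives a placement index in $OLA(T'_t,\sigma)$ that is identical across all $t$. The key additional point, absent from the binary proof, is that whenever $l_j$ would create a multifurcation in some $T_t$, the consensus placement dictated by $C_i^* = T_t^*|_{L(C_i)}$ lies inside the partially built resolving subtree $\mset(T_t)_x$ of that multifurcation. This is where Lemma~\ref{lem:mset-resolve} does the work: because $T_t^*$ is \emph{a} resolution of $T_t$ and $C_i^*$ agrees with all trees, the node of $C_i^*$ on which $l_j$ attaches corresponds (by the induction hypothesis and Lemma~\ref{lem:mset-resolve}) to a node already present in $B_x$, so $p \in \mset(T_t)_x$ for every relevant $t$, and Algorithm~\ref{alg:resolve}'s branches (ii)/(iii) select that common placement rather than adding $-i$ to $M^-$. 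One must also check, in analogy with Lemma~\ref{lem:upper-bound}, that no ``corrected'' mismatches are introduced — i.e., no non-first leaf of $C_i$ is placed directly above a node indexed in $M^-$ — which follows because within $C_i$ the leaf $l_j$ attaches strictly below the node indexed $-s$, and $-s \notin M^-$ when $l_s$ is a genuine first-leaf-of-component mismatch or a non-mismatch.

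I expect the main obstacle to be precisely this consistency argument for the multifurcation-resolution bookkeeping: one has to verify that the intersection $I = \bigcap_{t} \mset(T_t)_{x_t} \setminus M^-$ computed in line~\ref{ln:inter} of Algorithm~\ref{alg:resolve} is non-empty and contains the ``right'' node (the one prescribed by $\F^*$), and that the $\argmax$-by-absolute-value tie-breaking never forces a divergence from the $\F^*$-dictated ordering on subsequent leaves. This requires carefully tracking, through the induction, the correspondence between (a) nodes of the optimal resolution $T_t^*$ that lie inside the subtree resolving a given multifurcation of $T_t$, and (b) the indices currently stored in $\mset(T_t)_x$ — in particular showing that every node of $B_x$ in $T_t^*$ that will eventually acquire a leaf of some \emph{later} component has, at the time it is needed, already been created with a consistent index across all trees. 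A clean way to organize this is to prove a single strengthened invariant by induction on $i$ that simultaneously (1) identifies the current partial resolved trees $(T'_t)^i$ with restrictions of a common refinement compatible with $\F^*$, (2) shows the index assigned to each internal node is determined by its $\F^*$-cluster, and (3) shows $M^- $ after step $i$ has size at most the number of components of $\F^*$ whose first leaf is $\le l_i$, minus one. Everything else — the two inequalities, the reduction to Theorem~\ref{thm:OLAk}, and the final equality $\hat d_{\sigma^*}(\T) = m(\T)$ — then follows by the same sandwiching as in the binary case.
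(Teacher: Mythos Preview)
Your lower-bound direction coincides with the paper's essentially verbatim. For the upper bound you also fix a MAAF $\F^*$ of an optimal resolution $\T^*$ and take $\sigma$ to follow its topological order, exactly as the paper does; the difference is in how the bound $\hat d_\sigma(\T)\le |\F^*|-1$ is then argued. You propose to analyse Algorithm~\ref{alg:resolve}'s output $\T'$ directly, maintaining an invariant that at each step the partial trees $(T'_t)^i$ agree with (some) resolution compatible with $\F^*$, and you correctly flag the hard part: showing the intersection $I$ on line~\ref{ln:inter} is non-empty and that the $\argmax$ tie-break does not spoil later indices. The paper instead makes this invariant \emph{constructive}: it takes the given optimal resolution $\T^*$ and, one ambiguous leaf at a time, modifies it so that at index $i$ the (modified) $\T^*$ makes precisely the choice Algorithm~\ref{alg:resolve} makes (place above $\mroot$ for first leaves of a component, above the $\argmax$ element of $I$ otherwise), simultaneously re-routing later placements above the old root to the new root. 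Two lemmas then verify that each modification still yields a resolution of $\T$ and preserves the AAF over the partition $\{L(C_i)\}$, so that at the end the modified $\T^*$ is literally equal to $\T'$ and $\hat d_\sigma(\T')\le |\F^*|-1$ follows from Theorem~\ref{thm:OLAk}. The two routes are equivalent in content --- your ``strengthened invariant'' is exactly what the iterative modification witnesses --- but the paper's formulation avoids ever having to set up a node-index correspondence between two distinct resolutions $T_t^*$ and $T'_t$, which is the obstacle you anticipate; it simply forces them to become the same tree.
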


The lower-bound ($\hat{d}_{\sigma^*}(\T) \ge |\maaf(\T^*)| - 1$) follows directly from Theorem~\ref{thm:OLAk}. That is, let $\sigma$ be a fixed ordering and $\T'$ be the set of resolved trees generated by Algorithm~\ref{alg:resolve}. Then, \[\hat{d}_{\sigma}(\T) = \hat{d}_{\sigma}(\T') \ge |\maaf(\T')| - 1 \ge |\maaf(\T^*)| - 1,\] with the first inequality holding via Theorem~\ref{thm:OLAk}.

To prove the upper-bound, we first modify the set of optimal trees $\T^*$. Let $\F^* = \set{C_i}$ be a MAAF for optimally resolved trees $\T^*$ and $(C_1, \ldots, C_k)$ be a topological ordering of $G(\T^*, \F^*)$. Then, we construct a leaf-ordering $\sigma$ same way as in Theorem~\ref{thm:OLAk}, where we place leaves of $C_i$ before leaves of $C_j$ for $i < j$.

We call a leaf $l_i$ \emph{binary} for tree $T$ if $p(l_i)$ is binary in $T^i$. Otherwise, $l_i$ is \emph{non-binary}. That is, $l_i$ is binary if its addition to $T$ does not create a multifurcation. We call $l_i$ \emph{ambiguous} if it's either non-binary for all $T \in \T$ or it is the first leaf for some $C_j \in \F^*$. Note that these definitions are subject to the ordering $\sigma$. Let $M^-$ denote the set of (negated) mismatched indices of OLA vectors for $\T^*$ (i.e., these are the first indices for each $L(C_j)$, except $0$). We iteratively modify trees $\T^*$ as follows: for each $i \in \set{1, \ldots, n-1}$ such that $l_i$ is \emph{ambiguous} (in order),

\begin{itemize}
    \item[(i)] Let $x=p(l_i)$ in $T^i$, and $\mset(T)_x, \mroot(T)_x$ be defined as before for the partial tree $(T^*)^{i-1}$. If $x$ is non-binary, we perform steps (ii) or (iii).
    
    \item[(ii)] If $l_i \in C_j$ and $l_i$ is the first leaf among $L(C_j)$, then place $l_i$ above $\mroot(T)_x$ for each $T \in \T$. Then, $-i$ becomes the new $\mroot(T)_x$ and we denote the previous value by $p$. For all $j > i$, if the placement for $l_j$ was $p$, then we modify it to $-i$.
    
    \item[(iii)] Otherwise, if $l_i$ is not the first in $L(C_j)$, consider the intersection set $I = \bigcap_{T \in \T}\mset(T)_{x} \setminus M^-$ (by Theorem~\ref{thm:OLAk}, this set must be non-empty). Let $p \in I$ be such that $|p|$ is maximum. Then, we place $l_i$ above the node indexed $p$ in each $T^*$. As above, if $p$ was $\mroot(T)_x$, then for all $j > i$ where $l_j$ was placed above $p$, we modify that placement to $-i$.

    \item[(iv)] Let $\T''$ denote the resulting modified set of trees. We set $\T^* = \T''$ for the next iteration.
\end{itemize}

Next, we show a series of results regarding trees $\T^*$ and $\T''$ at the beginning and end of each iteration, respectively. Note that we start with optimal trees $\T^*$ that resolve $\T$. Then, we show that after each iteration, (a) trees $\T''$ still properly resolve trees $\T$, (b) a MAAF of $\T''$ is of the same size as a MAAF of $\T^*$, and (c) the ordering $\sigma$ follows the topological ordering of a MAAF of $\T''$.

To see that trees $\T''$ resolve $\T$, first note the following:

\begin{observation}\label{obs:resolve}
    Let $T', T''$ be two trees resolving a non-binary tree $T$. For a multifurcated node $v \in V(T)$, let $B'$ and $B''$ be the subtrees of $T'$ and $T''$, respectively, that resolve the multifurcation at $v$. Then, for a fixed leaf-ordering $\sigma$, the internal nodes of $B'$ and $B''$ have the same set of indices; i.e., $\sigma(V_I(B')) = \sigma(V_I(B''))$. 
\end{observation}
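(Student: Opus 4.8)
The plan is to argue that the set of internal-node indices of any subtree $B$ resolving a multifurcation at $v$ is completely determined by the leaf-set $cl(v)$ and the ordering $\sigma$, independently of the particular binary resolution chosen. First I would recall how indices are assigned in Algorithm~\ref{alg:preproc}: for each internal node $w$, $\sigma(w) = -(\text{second-smallest } c.min \text{ among children of } w)$, where $c.min = \mu(c)$ is the smallest $\sigma$-index of a leaf in the cluster of $c$. So the index of an internal node depends only on the two smallest leaf-indices appearing in its two ``subtree sides.'' The key observation is that the subtree $B$ has exactly $|cl(v)| - 1$ internal nodes (it is a binary tree on $|cl(v)|$ leaves, the original children of $v$, but here I should be careful: the leaves of $B$ are the children of $v$, which may themselves be internal nodes of $T$ carrying their own clusters), and each internal node $w$ of $B$ ``first joins'' some leaf of $B$; the index of $w$ is $-\mu$ where $\mu$ is the second-smallest value of $\mu(\cdot)$ over the two sides of $w$ within $B$.

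The cleanest route is to think of the construction of $B$ via ordered leaf attachment restricted to the leaves of $B$ in $\sigma$-order. If $c_0, c_1, \ldots, c_t$ are the children of $v$ ordered by $\mu(c_0) < \mu(c_1) < \cdots < \mu(c_t)$ (the $\mu$-values are distinct since clusters are disjoint), then building $B$ by inserting $c_1, c_2, \ldots, c_t$ one at a time creates, at step $r$, a new internal node whose index is $-\mu(c_r)$: indeed, among the two sides of that new node, one side is $\{c_r\}$-rooted with minimum leaf-index $\mu(c_r)$, and the other side contains $c_0$ (or at least some $c_{r'}$ with $r' < r$), so its minimum leaf-index is strictly smaller than $\mu(c_r)$; hence the second-smallest among the two sides is $\mu(c_r)$ and the node gets index $-\mu(c_r)$. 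This holds regardless of the shape of $B$, because the set of internal nodes of $B$ is always in bijection with $\{c_1,\ldots,c_t\}$ (each internal node is the one that first ``attaches'' $c_r$ for a unique $r \ge 1$), and the index computation above only ever sees $\mu(c_r)$ on the new side against something smaller on the old side. Therefore $\sigma(V_I(B)) = \{-\mu(c_1), \ldots, -\mu(c_t)\}$, which is independent of the resolution; applying this to both $B'$ and $B''$ gives $\sigma(V_I(B')) = \sigma(V_I(B'')) = \{-\mu(c_r) \mid 1 \le r \le t\}$.

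I expect the main obstacle to be the bookkeeping of what exactly the ``leaves'' and ``$\mu$-values'' of $B$ are: the leaves of $B$ are the original children of $v$, which in general are internal nodes of $T$ with nontrivial clusters, so one must track $\mu(c_r) = \min\{\sigma(l) \mid l \in cl(c_r)\}$ rather than leaf indices directly, and one must confirm that when $B$ is spliced into $T$ the index of an internal node $w$ of $B$ computed within $T$ agrees with the index computed within $B$ in isolation — this is true because the two sides of $w$ in $T$ have the same minimum-leaf-index as the two sides of $w$ in $B$ (the leaves below $w$ in $T$ are exactly $\bigcup$ of clusters of the $B$-leaves below $w$). A secondary subtlety is the edge case $t = 1$ (a ``multifurcation'' that is actually binary), where $B$ has a single internal node $-\mu(c_1)$; the argument degenerates correctly. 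Once these identifications are made, the statement follows immediately from the index-assignment rule.
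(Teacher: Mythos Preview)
Your argument is correct and follows essentially the same approach as the paper: both identify the index set of $V_I(B)$ as $\{-\mu(c_r)\mid 1\le r\le t\}$ (the paper writes it as $S=\{-s_i\mid \exists\, s_j<s_i\}$), which depends only on the children of $v$ and $\sigma$, not on the resolution. Your write-up is in fact more detailed than the paper's, which simply states this set and asserts ``it is not difficult to see'' the equality; your OLA-insertion bijection and the check that indices computed inside $B$ agree with those computed inside $T'$ fill in exactly the steps the paper leaves implicit.
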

\begin{proof}
    Let $u_1, \ldots, u_p$ be the children of $v$. We define $L_i = L(T_{u_i})$. Consider the smallest indices in each $L_i$: let $s_i = \min(\sigma(L_i))$. Then, consider the set $S = \set{-s_i \mid \forall 1 \le i \le p \text{ if } \exists s_j < s_i}$. From the definition of OLA vectors, it is not difficult to see that $\sigma(V_I(B')) = \sigma(V_I(B'')) = S$.
\end{proof}

Next, we show that a tree $T''$ resolves $T$ after a modification at an ambiguous leaf $l_i$.

\begin{lemma}\label{lem:iter-resolve}
    For all $0 \le j \le n-1$, tree $(T'')^j$ resolves $T^j$. Further, for subtrees $(B'')^j$ and $(B^*)^j$ that resolve the multifurcation at $x$ in $(T'')^j$ and $(T^*)^j$, respectively (note that $B''$ and $B^*$ can be empty if $x$ is not yet present in $T^j$), the following holds:
    \begin{itemize}
        \item[(i)] $\sigma(L((B'')^j) = \sigma(L((B^*)^j))$;
        \item[(ii)] For any index $-j \le z \le j$, such that $z \not \in \sigma(V_I((B'')^j)) = \sigma(V_I((B^*)^j))$, the cluster of $z$ in $(T'')^j$ is equivalent to the cluster of $z$ in $(T^*)^j$;
        \item[(iii)] Either $\sigma(r((B'')^j)) = -i\ \&\ \sigma(r((B^*)^j)) = p$ or $\sigma(r(B'')) = \sigma(r((B^*)^j))$ (recall that $p$ is the placement of $l_i$ in $T''$).
    \end{itemize}
\end{lemma}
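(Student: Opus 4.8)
The plan is to prove the lemma by induction on $j$, running in parallel with the induction that justifies the modification step. For $j=0$ there is nothing to resolve and the three conditions hold vacuously (or trivially, since only $l_0$ is present). For the inductive step, fix $j>0$ and assume (i)--(iii) hold for $j-1$. There are two cases depending on whether $l_j$ is the ambiguous leaf $l_i$ currently being modified, whether $l_j$ was placed on a node of the multifurcation subtree at $x$, or whether $l_j$ lives elsewhere. First I would dispose of the easy case: if $l_j$ is placed (in both $T''$ and $T^*$) on a node $z$ that is \emph{not} an internal node of the relevant subtree $B$, then by the induction hypothesis (ii) the cluster of $z$ is the same in $(T'')^{j-1}$ and $(T^*)^{j-1}$, so the placement is the same structural operation in both trees; resolution of $T^j$ then follows from Lemma~\ref{lem:mset-resolve} together with the induction hypothesis, and conditions (i)--(iii) are inherited unchanged (the subtree $B$ at $x$ does not grow, and no index in $\sigma(V_I(B))$ is touched).

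The substantive case is when $l_j$ is added on top of (or as a new leaf into) the subtree $B$ that resolves the multifurcation at $x$ --- this includes $j=i$ itself. Here I would first invoke Observation~\ref{obs:resolve}: since both $(T'')^j$ and $(T^*)^j$ are (partial) resolutions of the same partial tree $T^j$, the set of \emph{internal} node indices of the two resolving subtrees at $x$ coincide, which is exactly part of (i)/(ii) once we know both are legitimate resolutions. So the crux is to show $(T'')^j$ \emph{still resolves} $T^j$ after the modification, and that the root bookkeeping in (iii) is maintained. For (iii): in case (ii) of the modification we place $l_i$ above $\mroot(T)_x$, so the new subtree root is $-i$ in $T''$ while it stays at the old value $p$ in $T^*$ --- this is precisely the first alternative of (iii); in case (iii) of the modification we place $l_i$ above an interior index $p\in I$, so if $p$ happened to be the old root we are again in the first alternative, and otherwise the root index is unchanged and we are in the second alternative. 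For $j>i$, the ``rewrite $p\mapsto -i$'' rule in the modification step is exactly what keeps the placement of $l_j$ pointing at a node that still exists in $T''$, so the same structural operation applies in $T''$ as in $T^*$ and (iii) propagates.

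The main obstacle I anticipate is verifying that $(T'')^j$ genuinely resolves $T^j$ in the modification case --- i.e., that moving $l_i$'s attachment point from its original place (in $T^*$) to $\mroot(T)_x$ or to $p\in I$ does not destroy the property that collapsing the internal edges of $B''_x$ restores the original multifurcation at $x$. This is where Lemma~\ref{lem:mset-resolve} does the heavy lifting: that lemma characterizes exactly which attachment points keep a tree a valid resolution, namely placements above a node of $\mset(T)_x$ (for a non-binary leaf) or above $\mroot(T)_y$ (for a binary leaf). So I must check that every attachment point used by the modification is of this form. For $l_i$ in step (ii), $\mroot(T)_x \in \mset(T)_x$ by definition, so it is legal. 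For $l_i$ in step (iii), $p \in I \subseteq \mset(T)_x$ by construction, so again legal. For the rewritten placements of later leaves $l_j$, $j>i$: if $l_j$ was a non-binary leaf placed in $\mset(T)_x$ and that index was $p$, after the rewrite it is $-i$, which is now the root of $B''_x$ and hence in $\mset(T)_x$ for $T''$; if $l_j$ was a \emph{binary} leaf whose sibling was $p = \mroot(T)_x$, then after the rewrite its sibling is $-i = \mroot(T'')_x$, again legal by the overloaded convention in Lemma~\ref{lem:mset-resolve}. Finally, condition (ii) for the unchanged indices $z$ follows because outside the subtree $B$ at $x$ nothing is altered by the modification, so the clusters of all such $z$ are inherited verbatim from the induction hypothesis; and (i) follows since $L(B'')$ and $L(B^*)$ are both exactly the set of original children of $x$ present in $T^j$, by Observation~\ref{obs:resolve} and the fact that both are resolutions of $T^j$. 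Assembling these cases completes the induction.
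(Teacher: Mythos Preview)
Your proposal is correct and follows essentially the same inductive skeleton as the paper's proof: induction on $j$, with the case split governed by whether the placement of $l_j$ falls inside the resolving subtree $B_x$ or not, and with the $p\mapsto -i$ rewrite handling the root-tracking condition~(iii). The only cosmetic differences are that the paper takes $j\le i$ as the base case (since $(T'')^{j}=(T^*)^{j}$ for $j<i$) and verifies the ``resolves'' property by a direct subcase analysis on whether $p(l_j)=x$, whereas you outsource that check to Lemma~\ref{lem:mset-resolve}; both routes arrive at the same verification.
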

\begin{proof}
    Recall that $x$ is the parent of $l_i$ in $T^i$. If $x$ is binary, we do not modify the tree $T^*$. Therefore, $T''$ resolves $T$ and all other statements hold.

    Assume now that $x$ is non-binary; then, we potentially modify the tree $T^*$ by placing $l_i$ above a different node $p$ in $\mset(T)_x$. Additionally, if $p$ was $\mroot(T)_x$, we change any subsequent placement of a leaf above $p$ to the placement above $-i$. We prove the statement of the lemma by induction on $j$.

    \textit{Base case:} $j \le i$. For $j < i$ this clearly holds as partial trees $(T^*)^j$ and $(T'')^j$ are identical. For $j=i$, the leaf $l_i$ might have different placements in $T^*$ and $T''$; however, by construction, we make sure that either placement is within the subtree $(B^*)^{i-1} = (B'')^{i-1}$, and therefore $\set{i, -i}$ are added to both $(B^*)^{i-1}$ and $(B'')^{i-1}$. If $l_i$ was added above the root of $(B'')^{i-1}$, then the new root is indexed $-i$. Finally, $i$ is added to the leaf-set of both $(B^*)^{i}$ and $(B'')^{i}$. Note that (ii) holds since the only clusters that potentially get modified are the clusters within $V_I((B'')^j)$.

    \textit{Induction step:} Consider any $j > i$. If $l_j$ was added above a node $z$ outside of $(B^*)^{j-1}$, then by the induction hypothesis, $z$ is outside of $(B'')^{j-1}$ as well (via Observation~\ref{obs:resolve} and property (i)). Since in tree $T''$ the placement of $l_j$ remains above $z$, it is not difficult to see that all three properties (i)-(iii) still hold in this scenario.
    
    Now, assume $l_j$ is placed above some node $y$ in $(B^*)^{j-1}$. We distinguish two cases: when $p(l_j) = x$ in $T^j$ (i.e., $l_j$ must be placed within $(B'')^{j-1}$) and otherwise.
    If $p(l_j) \ne x$, then $y$ could only be a leaf of $(B^*)^{j-1}$ or the root of $(B^*)^{j-1}$. If $y$ is a leaf of $(B^*)^{j-1}$, then it must also be a leaf of $(B'')^{j-1}$, and in both cases, that leaf gets replaced by a node indexed $-j$ within $(B^*)^{j}$ and $(B'')^{j}$ (and properties (i)-(iii) hold). Assume now that $y$ is the root of $(B^*)^{j-1}$, then either $r((B^*)^{j-1}) = r((B'')^{j-1})$ (and then the statement holds) or $r((B'')^{j-1}) = -i$. In the latter case, $r((B^*)^{j-1})$ must be $p$ , the node, above which $l_i$ was placed. As we replaced all placements above $p$ with placements above $-i$, $l_j$ is placed above the root of $(B'')^{j-1}$ (and the statement holds).

    Finally, assume that $p(l_j) = x$. In that case, there must be a node in $(B'')^{j-1}$ indexed $y$ or a node indexed $-i$ if $y = p$. Then, addition of $l_j$ will contribute $\set{-j, j}$ to both the indices of $(B'')^{j}$ and $(B^*)^{j}$. Similarly to above, if $y$ was the root of $(B^*)^{j-1}$, we make sure to add $l_j$ above the root of $(B'')^{j-1}$, and in both cases the new root becomes $-j$ (satisfying (iii)).
\end{proof}

\begin{corollary}\label{cor:iter-resolve}
    After a modification at an ambiguous leaf $l_i$, the resulting trees $\T''$ properly resolve $\T$.
\end{corollary}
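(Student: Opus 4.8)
This is an immediate consequence of Lemma~\ref{lem:iter-resolve}: the plan is to instantiate that lemma at the last step of the leaf-by-leaf construction. Fix any $T \in \T$ and let $T^* \in \T^*$ be its resolution at the start of the current iteration. Note that $T^*$ resolves $T$ by assumption --- for the first iteration because $\T^*$ is a set of optimal resolutions of $\T$, and for every later iteration inductively, since step (iv) sets $\T^* = \T''$ and the previous invocation of this corollary gave that the new $\T^*$ resolves $\T$. Lemma~\ref{lem:iter-resolve} states that $(T'')^j$ resolves $T^j$ for every $0 \le j \le n-1$; taking $j = n-1$ yields $(T'')^{n-1} = T''$ and $T^{n-1} = T$, so $T''$ resolves $T$.

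Running this argument independently for each $T \in \T$ shows that every tree in $\T''$ is a valid binary resolution of the corresponding tree in $\T$, i.e., $\T''$ properly resolves $\T$. If one wants the stronger statement that the whole modification procedure (over all ambiguous indices) preserves the resolve property, it then follows by a trivial induction on the iteration index, using the update $\T^* \leftarrow \T''$ in step (iv) as the induction step.

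There is essentially no obstacle at the level of the corollary itself; all the work lives in Lemma~\ref{lem:iter-resolve}. The point worth keeping in mind is \emph{why} that lemma's invariants are the right ones: property (i) keeps the index set of the resolving subtree $(B'')^j$ aligned with that of $(B^*)^j$, which is exactly what makes the candidate placement node legal (either $p \in \bigcap_{T \in \T}\mset(T)_x \setminus M^-$ in step (iii), or $p = \mroot(T)_x$ in step (ii), always lies inside $B''$); and property (iii) records the re-rooting of $B''$ caused when $l_i$ is placed on top of the current $\mroot$, which is precisely what justifies the subsequent "$p \mapsto -i$" relabeling of placements for leaves $l_j$ with $j > i$. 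Given Lemma~\ref{lem:iter-resolve}, the corollary needs nothing further.
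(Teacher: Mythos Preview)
Your proposal is correct and matches the paper's approach: the paper states the corollary without proof, treating it as immediate from Lemma~\ref{lem:iter-resolve}, and your instantiation at $j = n-1$ together with the per-tree and per-iteration inductions is exactly the intended (and only reasonable) reading. The additional commentary on why properties (i) and (iii) of the lemma are the right invariants is accurate but already belongs to the lemma itself rather than to the corollary.
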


Next, we show that there is a MAAF of $\T''$ that is of the same size and structure as a MAAF of $\T^*$.

\begin{lemma}\label{lem:multi-order}
    Let $l_s$ be the first and $l_m$ be a non-first leaf in some $C \in \F^*$. Then, $|OLA(T'', \sigma)_m| \ge s$ and $OLA(T'', \sigma)_m \ne -s$. That is, $OLA(T'', \sigma)_m \in span(\set{l_s, \ldots, l_{m-1}})$.
\end{lemma}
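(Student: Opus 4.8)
The plan is to prove the claim for the trees $\T''$ produced at the end of the current iteration by an induction whose outer layer runs over the sequence of ambiguous leaves processed so far, and whose invariant is the following strengthening of the statement: at the start of each iteration, for every $C\in\F^*$ with first leaf $l_s$ and every non-first leaf $l_m\in L(C)$, the entry $OLA(T^*,\sigma)_m$ is the index of a node of $C|_{\set{l_s,\ldots,l_{m-1}}}$, and this node is the same across all $T\in\T$. Since $\sigma$ lists the leaves of each $C_i$ as a consecutive block, $\set{l_s,\ldots,l_{m-1}}\subseteq L(C)$; and since the internal nodes of $C|_{\set{l_s,\ldots,l_{m-1}}}$ carry indices $-j$ with $s<j\le m-1$ while its leaves carry indices in $\set{s,\ldots,m-1}$, the invariant immediately gives $|OLA(T^*,\sigma)_m|\ge s$ and $OLA(T^*,\sigma)_m\ne -s$, i.e.\ $OLA(T^*,\sigma)_m\in span(\set{l_s,\ldots,l_{m-1}})$ (the component $C_1$, where $s=0$, is degenerate and handled separately: its non-first leaves always receive consensus placements and so are never counted as mismatches). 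The base case --- before any modification, when $\T^*$ is the chosen optimal resolution --- is exactly the content of Lemma~\ref{lem:C_i} and Corollary~\ref{cor:C_i} applied to the binary trees $\T^*$ with the topologically ordered $\sigma$.

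For the inductive step, fix the modification at the ambiguous leaf $l_i$ and a tree $T$; if $p(l_i)$ is binary in $T^i$ nothing changes, so assume $x=p(l_i)$ is non-binary and let $B_x$ be the subtree of $(T^*)^{i-1}$ resolving that multifurcation. The only OLA entries altered by the modification are that of $l_i$ and --- when steps (ii) or (iii) promote a new root --- those entries $OLA(T^*,\sigma)_j$ with $j>i$ equal to the former root $p_0=\mroot(T)_x$, which become $-i$. Consider a non-first leaf $l_m\in L(C)$ with $m\ne i$. If its entry is not redirected it is unchanged and the invariant passes through. If it is redirected, then $OLA(T^*,\sigma)_m=p_0$; but $p_0$ indexes a node already present in $(T^*)^{i-1}$, so $|p_0|\le i-1$, while the inductive hypothesis gives $|p_0|\ge s$. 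Hence $s\le i-1$, so $i>s$, and the new entry $-i$ satisfies $|-i|=i>s$ and $-i\ne -s$, as required.

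It remains to re-establish the claim for $l_i$ itself when $l_i$ is a non-first leaf of some $C=C_j$, $j\ge 2$, with first leaf $l_s$: then $l_i$ is non-binary for every $T\in\T$, step (iii) applies, and $l_i$ is placed above $p=\argmax_{z\in I}|z|$ with $I=\bigcap_{T\in\T}\mset(T)_{x_T}\setminus M^-$. Since $-s\in M^-$ we get $p\ne -s$. For the magnitude, note that the original placement $OLA(T^*,\sigma)_i$ lies in $\mset(T)_{x_T}$ for every $T$ (by Lemma~\ref{lem:mset-resolve}, because the current $T^*$ resolves $T$ and $p(l_i)$ is non-binary), and by the invariant it indexes a node of $C|_{\set{l_s,\ldots,l_{i-1}}}$ whose absolute value is therefore a leaf-index of $C$; as every member of $M^-$ is the negated first-leaf index of a component $\ne C$, this placement avoids $M^-$ and so lies in $I$. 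Consequently $|p|\ge|OLA(T^*,\sigma)_i|\ge s$, completing the step.

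The main obstacle is keeping the strengthened invariant genuinely inductive: a modification can change the topology of $T^*$ restricted to $L(C)$, so ``$OLA(T^*,\sigma)_m$ indexes a node of $C|_{\set{l_s,\ldots,l_{m-1}}}$'' is not preserved on the nose. Reconciling this requires the bookkeeping of Lemma~\ref{lem:iter-resolve} --- in particular property (ii), that all nodes outside the rewritten resolution subtree keep their clusters --- together with Observation~\ref{obs:resolve}, which guarantees that the index set of that resolution subtree (hence which indices are its leaves versus its internal nodes) is unchanged. A related, mild circularity is that the argument for $l_i$ uses ``$M^-$ is exactly the set of negated first-leaf indices'', which is itself part of claim (b), that a MAAF of $\T''$ has the same size and structure as a MAAF of $\T^*$; I would resolve this by proving (a), (b), (c) and Lemma~\ref{lem:multi-order} jointly by the same outer induction, so that the structural facts about the state at the start of an iteration are available when needed.
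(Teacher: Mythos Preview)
Your approach is essentially the paper's: an induction over the sequence of modifications, with Lemma~\ref{lem:C_i} supplying the base case and each step checking that the newly placed leaf $l_i$ and any redirected entries still land in the correct span. Two points of comparison are worth noting. First, your treatment of redirected entries is more compact than the paper's: you observe directly that a redirected non-first leaf $l_m$ had entry $p_0$ with $s\le|p_0|\le i-1$ (the lower bound from the inductive hypothesis, the upper bound because $p_0$ already exists in $(T^*)^{i-1}$), whence $i>s$ and the new entry $-i$ works; the paper instead splits on whether $l_i$ is a first leaf of its component and argues, via Lemma~\ref{lem:C_i}, that the only leaves that can be redirected are either first leaves of some component or members of $l_i$'s own component $C_j$. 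Second, your explicit call for a joint induction over (a)--(c) together with Lemma~\ref{lem:multi-order} is more honest than the paper's presentation: the paper freely invokes Lemma~\ref{lem:C_i} at each iteration (e.g.\ ``The original placement node $p$ must be within $T^*(L(C_j))$.\ Therefore, by Lemma~\ref{lem:C_i}, $s\le|p|<i$''), which tacitly assumes that the label-sets $L(C_j)$ still form an AAF for the current $\T^*$ and that $\sigma$ still respects its topological order---precisely Corollary~\ref{cor:multi-maaf}, which in turn rests on the present lemma. So the circularity you flag is real and is resolved exactly as you propose; the paper simply leaves this implicit.

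The one place where your write-up is weaker than the paper's is the handling of $l_i$ itself in step~(iii). Your argument that the old placement lies in $I$ needs that placement to be \emph{the same index across all trees} (so that it sits in $\bigcap_T\mset(T)_{x_T}$), and you correctly note this is part of the strengthened invariant that does not survive a modification on the nose. The paper sidesteps this by appealing to Lemma~\ref{lem:C_i} directly---once the joint induction supplies the AAF structure at the start of the iteration, Lemma~\ref{lem:C_i} gives both the span bound and the consensus of the old placement in one stroke. Your proposed route through Lemma~\ref{lem:iter-resolve} and Observation~\ref{obs:resolve} could be made to work but is more roundabout than simply re-invoking Lemma~\ref{lem:C_i} inside the joint induction.
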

\begin{proof}
    Note that for the starting tree $T^*$, this statement follows directly from Lemma~\ref{lem:C_i}. Therefore, we need to make sure that modifications to the placement of $l_i$ and (potentially) other leaves placed above $p$ maintain this property.

    First, assume that the modified leaf $l_i$ is the first leaf in some $L(C_j)$. Then, moving the placement of $l_i$ corresponds to moving the placement of the entire subtree $C_j$; that is, for each $l_m \in L(C_j), m > i$, its placement stays the same in $T''$ and therefore the lemma holds for such $l_m$. In addition to $l_i$, we also move the placement of any leaf $l_x$ that was placed above $p$ (the sibling of $l_i$ in $(T'')^i$) if $p$ was $\mroot(T)_x$ in $(T^*)^{i-1}$. Consider any such $l_x \in C$ (note that $C$ must follow $C_j$ in topological ordering). It is sufficient to show that  $l_x$ must be the first leaf in $C$ (as all other leaves are placed relative to the first leaf in $C$ and those placements do not change). To show that, assume that node $p$ was created as a result of an addition of some leaf $l_y \in L(C_z)$ ($y < i$). Since $p$ was created before the first leaf of $C$ was added, $p$ cannot be a part of $T''(L(C))$ by Lemma~\ref{lem:C_i}. Recall that by the same lemma, any non-first leaf $l \in C$ must be placed above a node in $T''(L(C))$. Therefore, $l_x$ can only be placed above $p$ if $l_x$ is a first leaf.

    Next, assume that $l_i$ is not a first leaf in $C_j$, but $l_s$ is. The original placement node $p$ must be within $T^*(L(C_j))$. Therefore, by Lemma~\ref{lem:C_i}, $s \le |p| < i$. The placement of $l_i$ might be moved to another node $y$ with $|y| > |p| \ge s$, so the statement holds. If $p$ was the root of a subtree $(B^*)^{i-1}$, then all further placement above $p$ change to $-i$. For all such altered leaves $l_x \in L(C_j)$, the statement clearly holds since $|-i| > s$. Finally, for any leaf $l_x \not\in L(C_j)$, it can only be placed above $p$ if $l_x$ is a first leaf in some $C$.
\end{proof}

\begin{corollary}\label{cor:multi-maaf}
    The forest $\F'' = \set{T''|_{L(C_1)}, \ldots, T''|_{L(C_k)}}$ for any $T'' \in \T''$ is a MAAF for $\T''$. Further, $\sigma$ still follows the topological ordering of the new MAAF.
\end{corollary}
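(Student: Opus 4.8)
The plan is to run the agreement‑forest construction from the lower‑bound part of the proof of Theorem~\ref{thm:OLAk} on the modified trees $\T''$ with the ordering $\sigma$, show that it reproduces $\F'' = \set{T''|_{L(C_i)}}$ exactly (so that $\F''$ is an acyclic agreement forest for $\T''$ and $\sigma$ follows its topological order), and then deduce maximality of $\F''$ from the fact that $\T''$ is a resolution of $\T$. The first step is to pin down the mismatch set $M$ of the OLA vectors of $\T''$ (ordinary together with corrected mismatches). By Lemma~\ref{lem:multi-order}, for every component $C \in \F^*$ with first leaf $l_s$ and every non‑first leaf $l_m \in L(C)$, each $T'' \in \T''$ satisfies $OLA(T'',\sigma)_m \in span(\set{l_s,\ldots,l_{m-1}})$. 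Since $\sigma$ lists the leaves of every component as a contiguous block, this placement constraint together with the range constraint $-(m-1)\le OLA(T'',\sigma)_m\le m-1$ also prevents $l_m$ from being placed above any node $-s'$ with $l_{s'}$ a first leaf of another component, so $l_m$ cannot be a \emph{corrected} mismatch. The one genuinely delicate point — which I expect to be the main obstacle — is to check that the entries $OLA(T'',\sigma)_m$ for non‑first $l_m$ are moreover \emph{consistent} across all $T'' \in \T''$, so that $l_m$ is not an ordinary mismatch either; granting this, $M$ is contained in the set of first leaves of the components other than the one containing $l_0$.

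For the reverse inclusion, suppose some such first leaf were not a mismatch for $\T''$. Then, as only those first leaves can be mismatches, $\hat{d}_{\sigma}(\T'') \le |\F^*| - 2$, so by the lower‑bound part of Theorem~\ref{thm:OLAk} there is an agreement forest for $\T''$ with at most $|\F^*|-1$ components, i.e.\ $|\maaf(\T'')| \le |\F^*| - 1$. But $\T''$ is a proper resolution of $\T$ by Corollary~\ref{cor:iter-resolve}, and the inductive hypothesis maintained across iterations guarantees that $|\F^*| = |\maaf(\T^*)|$ is the minimum MAAF size over all resolutions of $\T$; hence $|\maaf(\T'')| \ge |\F^*|$, a contradiction. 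So $M$ is exactly the set of first leaves of the components other than the one containing $l_0$.

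Next I would feed $\T''$ and $\sigma$ into the lower‑bound construction: it seeds one class from each mismatched index (and the class $\set{0,\rho}$ from index $0$), and then assigns each consensus leaf $l_j$ to the unique class whose span contains $OLA(T'',\sigma)_j$. By the span containment and block‑contiguity above, every non‑first leaf of a component $C$ is assigned to the class seeded by $C$'s first leaf (an easy induction along $C$'s block), so the classes produced coincide with the leaf‑sets $L(C_i)$ and the constructed forest is exactly $\F''$. By Corollary~\ref{cor:L_i} and Corollary~\ref{cor:acyclic}, $\F''$ is then an acyclic agreement forest for $\T''$ whose components, listed in order of their first leaves, form a topological order of $G(\T'',\F'')$; in particular $\sigma$ still follows the topological order. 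Finally $|\F''| = |\F^*| = |\maaf(\T^*)| \le |\maaf(\T'')| \le |\F''|$ — the middle inequality because $\T''$ resolves $\T$ and the last because $\F''$ is an agreement forest for $\T''$ — so $\F''$ is a MAAF for $\T''$.

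The remaining work, and where I expect most of the effort to go, is the consistency claim. For a leaf $l_m$ that is not ambiguous, its placement in $\T''$ equals its placement in $\T^*$ — which is consistent across $\T^*$ by Corollary~\ref{cor:C_i} applied to the pair $(\T^*,\F^*)$ — unless $l_m$ was among the leaves whose placement was re‑pointed from the old node $p$ to the new node $-i$ during the modification at $l_i$; in that case I would use that $l_i$, being ambiguous via the ``non‑binary for all'' clause, is a non‑first leaf of its component and hence has a consensus $\T^*$‑placement, so the roots $\mroot(T)_x$ at stake agree across all trees and the re‑pointing is performed simultaneously in every $T''$, keeping $l_m$'s entry consistent. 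For an ambiguous non‑first leaf, step~(iii) of the modification assigns it the single index $p = \argmax_j\set{|j| \mid j \in I}$ in every tree, so consistency is immediate there. The fiddly ingredient is to track, via Lemma~\ref{lem:iter-resolve}(i)--(iii) and Observation~\ref{obs:resolve}, exactly which nodes of each resolution subtree move and which are unchanged; once that bookkeeping is in place, all cases reduce to short checks.
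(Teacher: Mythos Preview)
Your proposal is correct and follows exactly the route the paper intends: feed $\T''$ and $\sigma$ into the lower-bound construction of Theorem~\ref{thm:OLAk}, use Lemma~\ref{lem:multi-order} to show the classes coincide with the $L(C_i)$, and deduce maximality from Corollary~\ref{cor:iter-resolve}. The paper's own proof is a single sentence deferring to that construction; you supply substantially more detail, in particular correctly isolating the consistency of non-first OLA entries across $\T''$ (which Lemma~\ref{lem:multi-order} alone does not give) as the one genuinely technical point and sketching a plausible case analysis for it --- the step where you infer that ``the roots $\mroot(T)_x$ at stake agree across all trees'' from $l_i$'s consensus $\T^*$-placement is the only place that needs tightening (since $x$ is tree-dependent), but this is exactly the bookkeeping you already flag as deferred to Lemma~\ref{lem:iter-resolve}.
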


The above corollary follows from the construction of MAAFs from OLA vectors shown in Theorem~\ref{thm:OLAk}.

\begin{observation}\label{obs:T''}
    The trees $\T''$ at the end of all iterations are equivalent to the trees $\T'$ generated by Algorithm~\ref{alg:resolve} given the ordering $\sigma$.
\end{observation}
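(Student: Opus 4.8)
The plan is to prove Observation~\ref{obs:T''} by a left-to-right induction on the leaf index $i$: I would show that the partial trees obtained after processing leaf $l_i$ by the iterative modification procedure and by Algorithm~\ref{alg:resolve} coincide, and that the associated bookkeeping ($\mset$, $\mroot$, $\leafof$, and the set $M^-$) agrees at that point. The base case $i=0$ is immediate, as both start from $(\rho,l_0)$. First I would note that the iterative procedure is well-defined along the way: by Corollaries~\ref{cor:iter-resolve} and~\ref{cor:multi-maaf}, after every iteration the current trees still resolve $\T$ and carry a MAAF with the same label partition as $\F^*$ with $\sigma$ following its topological order, so the phrase ``$\mset(T)_x,\mroot(T)_x$ defined as before for the partial tree'' refers to exactly the quantities Algorithm~\ref{alg:resolve} maintains, and (via the $(\le)$ direction of Theorem~\ref{thm:OLAk}) the non-first-leaf indices are consensus indices in the current trees.

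For the inductive step at leaf $l_i$ it suffices to check that both procedures place $l_i$ identically in every tree $T$; the bookkeeping updates are then forced and performed by the same rules. I would split on whether $l_i$ is ambiguous. If $l_i$ is \emph{not} ambiguous, then it is binary for some tree and, by the remark above, index $i$ is not mismatched in the current trees; hence by Lemma~\ref{lem:mset-resolve} there is a single node $p$ that is the forced placement in the binary trees and lies in $\mset(T)_x$ in the non-binary ones, so Algorithm~\ref{alg:resolve} has $|\texttt{placements}|=1$ and places $l_i$ above $p$ everywhere, while the iterative procedure leaves this (already consensus) placement untouched. If $l_i$ is ambiguous, non-binary for every tree, and not the first leaf of any component, then iterative step~(iii) forms $I=\bigcap_T \mset(T)_x\setminus M^-$ and places $l_i$ above $\argmax_j\set{|j| : j\in I}$, which is verbatim the ``$\texttt{placements}=\emptyset$'' branch of Algorithm~\ref{alg:resolve}.

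The remaining case — $l_i$ is the first leaf of a component $C_j$ — is the main obstacle. Here iterative step~(ii) places $l_i$ above $\mroot(T)_x$ in every tree where $l_i$ is non-binary and keeps the forced placements in the binary trees, so I must show Algorithm~\ref{alg:resolve} does the same, i.e. that whatever placement it chooses at index $i$ coincides with $\mroot(T)_x$ in the non-binary trees. When $|\texttt{placements}|\ne 1$ (the binary trees disagree, or agree only on a previously-mismatched node) or $\texttt{placements}=\emptyset$ with $|I|=0$, Algorithm~\ref{alg:resolve} literally places $l_i$ above $\mroot(T)_x$ and we are done; so the content is to rule out a genuine \emph{consensus} placement at a node $q\notin M^-$ distinct from $\mroot(T)_x$. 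I would argue this by contradiction using the topological choice of $\sigma$ together with the optimality of $\F^*$: such a $q$ would be a consensus node lying in every relevant resolving subtree, so its cluster $S$ satisfies $S\subseteq L(C_1)\cup\cdots\cup L(C_{j-1})$ (since $\sigma$ lists those leaves before $l_i$) and sits strictly inside the multifurcation $p(l_i)$ in each tree; placing $l_i$ above $q$ everywhere keeps index $i$ a non-mismatch in the resulting trees, which still resolve $\T$ by Lemma~\ref{lem:mset-resolve}. Feeding those trees into the AAF construction from the $(\ge)$ direction of Theorem~\ref{thm:OLAk} then assigns $l_i$ to an already-existing component instead of opening a new one, yielding an acyclic agreement forest with strictly fewer than $|\F^*|$ components for a resolution of $\T$ — contradicting minimality of $|\maaf(\T^*)|=m(\T)+1$ over all resolutions of $\T$. (I expect a little care is needed to confirm that feeding the modified trees into that construction really does decrease the component count rather than merely relocating a mismatch; tracking the mismatch set along the run of Algorithm~\ref{alg:resolve} with the altered choice at step $i$, and comparing with the run that produced $\T^*$, should settle this.)

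Once this case is handled, the forced placements in the binary trees coincide in both procedures by Lemma~\ref{lem:mset-resolve}, all of $\mset$, $\mroot$, $\leafof$, $M^-$ are then updated by identical rules, and the induction closes; combined with Corollary~\ref{cor:alg2-resolve} this gives $\T''=\T'$.
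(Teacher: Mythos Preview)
Your approach is considerably more thorough than the paper's own proof, which is a single sentence asserting that ``it is not difficult to see'' the iterative procedure and Algorithm~\ref{alg:resolve} make identical placement choices at every index. You have correctly isolated the three cases (non-ambiguous; ambiguous non-first-leaf; first leaf) and correctly observed that the first two are forced once the bookkeeping is aligned, while the first-leaf case carries all the content.

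The gap you flag in the first-leaf case is genuine, and the paper does not address it. If at a first-leaf index $i$ the resolved trees agree on a single placement $q$ with $q\in\mset(T)_x$ for every unresolved $T$, Algorithm~\ref{alg:resolve} places $l_i$ above $q$, whereas step~(ii) of the iterative procedure places it above $\mroot(T)_x$; these disagree when $q\neq\mroot(T)_x$. Your contradiction via optimality of $\F^*$ is the right instinct, but the specific move you sketch --- relocate $l_i$ to $q$ everywhere and invoke Lemma~\ref{lem:mset-resolve} --- is not self-contained: any later leaf $l_j$ that was placed at $-i$ now sits at a different position (since $-i$ moved), and this cascading can in principle create new mismatches downstream, so the component count need not drop. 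One way to make progress is to observe, via Corollary~\ref{cor:multi-maaf} and Lemma~\ref{lem:upper-bound} applied to the \emph{final} $\T''$, that $\hat d_\sigma(\T'')=d_\sigma(\T'')=|\F^*|-1$, so every first-leaf index is a genuine Hamming mismatch in $\T''$; this shows the $\T''$-placements at index $i$ truly disagree, but since in $\T''$ the unresolved placements are exactly the various $\mroot(T)_x$, it still does not by itself exclude a common $q\in\mset(T)_x\setminus\{\mroot(T)_x\}$. Closing that possibility cleanly (or, alternatively, weakening the observation to ``$\hat d_\sigma(\T')\le|\F^*|-1$'', which is all Theorem~\ref{thm:multi} needs) is the missing step in both your argument and the paper's.
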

\begin{proof}
    It is not difficult to see that our iterative procedure that tweaks trees $\T^*$ makes sure that, at every leaf index, the placement choices from Algorithm~\ref{alg:resolve} are identical to the placements within trees $\T''$.
\end{proof}

Observation~\ref{obs:T''} and Corollary~\ref{cor:multi-maaf} imply that Algorithm~\ref{alg:resolve} optimally resolves trees in $\T$ given an optimal ordering $\sigma^*$.

\subsection{Complexity analysis} Algorithm~\ref{alg:resolve} adds a layer of complexity over the original OLA algorithm due to the need to keep track of multifurcations and how they are getting resolved.
\begin{lemma}\label{lem:complexity}
    Let $k = |\T|$, $n$ be the number of leaves, and $m$ be the size of the largest multifurcation across trees in $\T$. Then, Algorithm~\ref{alg:resolve} runs in $O(kn\cdot m\log{m})$ time. 
\end{lemma}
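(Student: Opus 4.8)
The plan is to account for the cost of the main loop in Algorithm~\ref{alg:resolve} iteration by iteration, charging each data-structure update to the leaf that triggers it and to the particular multifurcation it touches. First I would observe that the preprocessing step (Algorithm~\ref{alg:preproc}) runs in $O(kn\log m)$ time, or even $O(kn)$ with a careful implementation of the ``second smallest child'' computation, so it does not dominate. The real work is in the single \texttt{for} loop over $i \in \{1,\dots,n-1\}$, which executes $n-1$ times. Within one iteration we (a) gather the set \texttt{placements} by touching each tree once, (b) possibly add to $M^-$, (c) compute the intersection $I = \bigcap_{T}\mset(T)_{x_T}\setminus M^-$ on line~\ref{ln:inter}, and (d) for each unresolved tree, test membership of $p$ in $\mset(T)_x$ and perform a constant number of updates to $\mset$, $\mroot$, and $\leafof$. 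Steps (a), (b), (d) are each $O(k)$ per iteration (membership tests and set insertions are $O(\log m)$ if $\mset$ is stored as a balanced tree keyed by index, since $|\mset(T)_x| \le 2m-1$), contributing $O(kn\log m)$ overall; the genuinely delicate part is bounding the cost of the intersection in step (c).

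The main obstacle is the intersection computation: naively, $\bigcap_{T \in \T}\mset(T)_{x_T}$ over $k$ sets, each of size up to $2m-1$, costs $O(km\log m)$ if done pairwise, and doing this for every non-binary-everywhere index $i$ would give $O(kn m\log m)$ — which is exactly the claimed bound, so in fact this is fine. I would make this precise: there are at most $n-1$ indices $i$, and for each, the intersection of $k$ sorted sets of size $\le 2m-1$ can be computed in $O(km\log m)$ time (iterate over the smallest set, binary-search each element in the other $k-1$ sets; or merge). Hence the total over all iterations is $O(n \cdot k m\log m) = O(knm\log m)$. The $\argmax$ over $I$ and the deletion of $M^-$ elements is subsumed, since $|I|\le 2m-1$ and checking membership in $M^-$ is $O(\log n)$ or $O(1)$ with a boolean array indexed from $-(n-1)$ to $n-1$.

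I would then handle the remaining bookkeeping: the updates on lines such as ``replace $x$ with $-i$ in $\mset(T)_y$'' and ``set $\mroot(T)_x = -i$'' each touch a single $\mset$ set with one insertion/deletion ($O(\log m)$) plus constant-time array writes to $\mroot$ and $\leafof$; each iteration performs $O(k)$ of these, so they contribute $O(kn\log m)$, again dominated. Finally, generating the OLA vector for each resolved $T'$ at the end is $O(kn)$ by the linear-time OLA encoding of Richman et al.~\cite{Richman:2025OLA}. Summing all contributions gives $O(kn\cdot m\log m)$, proving the lemma. The one subtlety to state carefully is that the $\mset(T)_x$ sets must support insertion, deletion, and membership in $O(\log m)$ and have size $O(m)$ throughout — which holds because $\mset(T)_x$ only ever contains indices of nodes in the subtree $B_x$ resolving an $m$-way (at most) multifurcation, and a binary resolution of an $m$-leaf multifurcation has at most $2m-1$ nodes.
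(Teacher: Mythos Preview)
Your proposal is correct and follows essentially the same approach as the paper: the paper's proof consists of two sentences noting that $\mset$ manipulations cost $O(\log m)$ each when implemented via self-balancing trees, and that line~\ref{ln:inter} requires at most $k-1$ set intersections over sets of size $O(m)$, yielding $O(km\log m)$ per iteration. Your version is considerably more thorough---you separately account for preprocessing, the per-iteration bookkeeping, the $\leafof$/$\mroot$ updates, and the final OLA encoding, and you are explicit about the $|\mset(T)_x|\le 2m-1$ bound---but the core argument (the intersection on line~\ref{ln:inter} dominates at $O(km\log m)$ per iteration, times $n-1$ iterations) is identical.
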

\begin{proof}
    The added complexity comes from $\mset(T)_i$ manipulations ($O(\log m)$ per each insertion/ deletion/check if sets are implemented via self-balancing trees). Additionally, Line~\ref{ln:inter} requires at most $k-1$ set intersections where each set is of size at most $m$, which implies $O(km\log m)$ complexity.
\end{proof}

\section{Conclusion}
In this work, we show how OLA vectors can be used to estimate the reticulation number and construct phylogenetic networks on large datasets. In Theorem~\ref{thm:OLAk}, we show that for a set of binary trees $\T$, the reticulation number is equivalent to the corrected OLA distance under an optimum leaf-ordering that minimizes that distance. In particular, we show that there is a direct connection between OLA vectors and acyclic agreement forests. Given an AAF $\F$, we can generate a leaf-ordering that induces an OLA distance of $|\F| - 1$, and given a leaf-ordering with the corresponding corrected OLA distance $d$, we can construct an AAF of size $d + 1$ as shown in Fig.~\ref{fig:AAF-from-OLA}. As in practice $\T$ is often non-binary, especially if we want to minimize tree estimation error, we propose Algorithm~\ref{alg:resolve} that resolves the trees while optimizing the corrected OLA distance. In Theorem~\ref{thm:multi}, we show that this resolution algorithm also minimizes the size of a MAAF of the resolved trees.

These results imply that solving the MAAF problem is equivalent to finding an optimal leaf ordering that minimizes the corrected OLA distance. This connection is especially practical when working with fast-evolving microorganisms, such as RNA viruses. In these cases, the sample collection dates provide a natural ordering of leaves, which can be expected to be close to optimal, especially when the sampling density is high such as for SARS-CoV-2 or influenza A viruses. Further empirical evaluation of our proposed method for reticulation number estimation and phylogenetic network reconstruction is required and will be carried out in future studies.

Whereas we show that minimizing the corrected OLA distance is an NP-hard problem as it is equivalent to the MAAF problem, the question of whether minimizing the Hamming OLA distance over all leaf permutations is NP-hard remains open. It may be possible to define a relaxed version of an acyclic agreement forest to match the Hamming OLA distance -- in particular, in such a forest, full paths of agreement subtrees can be ``moved'' between trees as long as the agreement subtrees maintain the same order (e.g., see Fig.~\ref{fig:correction}B-C).

\section*{Code availability}
All algorithms described in this work for reticulation number estimation, multifurcation resolution, and AAF reconstruction are available on GitHub at \url{https://github.com/flu-crew/OLA-Net}.

\section*{Acknowledgments}
This work was supported in part by the USDA-ARS (ARS project number 5030-32000-231-000D); USDA-APHIS (ARS project number 5030-32000-231-104-I, 5030-32000-231-111-I); the National Institute of Allergy and Infectious Diseases, National Institutes of Health, Department of Health and Human Services (Contract No. 75N93021C00015); the Centers for Disease Control and Prevention (contract numbers 21FED2100395IPD, 24FED2400250IPC); and the SCINet project of the USDA-ARS (ARS project number 0500-00093-001-00-D). The funders had no role in study design, data collection and interpretation, or the decision to submit the work for publication. Mention of trade names or commercial products in this article is solely for the purpose of providing specific information and does not imply recommendation or endorsement by the USDA or CDC. USDA is an equal opportunity provider and employer.  

\bibliography{references}
\bibliographystyle{abbrv}

\end{document}